\documentclass[12pt, center]{article}
\usepackage{float}
\usepackage{amsfonts}
\usepackage{mathrsfs}
\usepackage{amsmath}
\usepackage{amsthm}
\usepackage{amssymb}
\usepackage{booktabs}
\usepackage{caption}
\usepackage{multirow}
\usepackage{cases}
\usepackage{mathrsfs}
\usepackage{graphicx}
\usepackage{array}
\usepackage{color}
\usepackage{needspace}
\usepackage{rotating}
\usepackage{threeparttable}
\usepackage{hyperref}
\usepackage[utf8]{inputenc}

\newtheorem{thm}{\bf Theorem}

\newtheorem{remark}{\bf Remark}
\newtheorem{exam}{\bf Example}
\usepackage{mathtools}
\usepackage{bm}
\usepackage{calc}
\usepackage{longtable}
\usepackage{setspace}
\usepackage[authoryear,round]{natbib} 

\begin{document}
 \oddsidemargin = -5pt \headsep=-10pt \footskip=50pt

\title{\bf\Large Estimation and Hypothesis Testing of Fixed Effects Models-Based Uncertainty for Factor Designs \\
}

\author{\normalsize Fan Zhang,  Zhiming Li\footnote{Corresponding author: zmli@xju.edu.cn}\\
{\em { \small College of Mathematics and System Science,  Xinjiang University,   Urumqi 830046,  China}}}
\date{}
\maketitle

\noindent
{\bf Abstract:} 
To analyze the uncertain data frequently encountered in practice, this paper proposes novel fixed-effects models that incorporate an uncertain measure to investigate variables of interest and nuisance variables in factor designs. First, an uncertain fixed-effects (UFE) model of a single-factor design is established, and uncertain estimation and hypothesis testing are conducted. We then extend the UFE model to two-factor designs with and without interactions and classify them as balanced or unbalanced based on the equality of replicates within each combination. In the above UFE models, the effectiveness and practicality of estimation and hypothesis methods are demonstrated through three real-world cases, including both balanced and unbalanced designs. These examples highlight the models' ability to handle uncertain experimental data.

\begin{flushleft}
{\bf Keywords:}   Factor design, fixed effects, uncertain measure, uncertain model. 
\end{flushleft}

\noindent \hrulefill
\section{Introduction}
The design of experiments is widely applied in agriculture (\citet{TorresMercado2024}), medicine (\citet{Muguruma2022}),  engineering (\citet{Liew2022}), manufacturing (\citet{Rahmatabadi2023}), and scientific research (\citet{Alexanderian2021}). An experiment often involves altering one or more factors that may affect the outcome under controlled conditions, and these designs are referred to as single-factor or multi-factor designs. Its core lies in systematically arranging experiments to investigate the relationships between the research subject and influencing factors, thereby providing a reliable foundation for data analysis. To analyze significant differences among factor effects, fixed-effects, random-effects, or mixed-effects models are commonly used in multi-factor designs. The fixed-effects (FE) model is a panel-data regression model.  Unlike the FE model, the random effects model treats individual-specific heterogeneity as part of the error terms,  pioneered by \citet{Eisenhart1947}. The mixed-effects model combines the first two types (\citet{Laird1982}). Among these models, fixed effects are regarded as a ubiquitous and powerful tool for eliminating unwanted variation (\citet{BD2024}). 

 Following \citet{Dean2017}, we review the classical FE model within the framework of probability theory. Take a single-factor design as an example. Let $y_{ij}$ be the $j$th $(j=1,\ldots, n)$ observation taken under the $i$th $(i=1,\ldots, a)$ level or treatment of a factor $A$. Let $\mu$ be the overall mean of all treatments, and $\mu_i(i=1,\ldots, a)$ be the mean of the $i$th level of the factor $A$.  Under the assumptions of independent and identically distributed observations, and normality, no corrections, and homoskedasticity of random errors,  $y_{ij}=\mu_i+\varepsilon_{ij}(i=1,\ldots, a, j=1,\ldots, n)$ is called a mean model, where $\varepsilon_{ij}$ is a random error and $\varepsilon_{ij}\sim N(0,\sigma^2)$. Denote $a_i=\mu-\mu_i(i=1,\ldots, a)$, a parameter unique to the $i$th level called the $i$th level effect. Thus, an alternative model, called the FE model, is established by
\begin{equation*}
\label{OneFE}
y_{ij}=\mu+a_i+\varepsilon_{ij}, i=1,\ldots, a, j=1,\ldots, n, \sum_{i=1}^aa_i=0,
\end{equation*}
which is also called the one-way or single-factor analysis of variance model. 
This framework was formalized by \cite{Scheffe1959} and extended to multi-factor designs, in which analysis of variance (ANOVA) tests the significance of factor effects. Subsequent research has addressed key complexities: handling unbalanced data via methods such as exact permutation tests and generalized linear models (\cite{KheradPajouh2010, Thiel2017}); addressing high-dimensional problems via regularization techniques (\cite{Engel2020}); managing heteroscedasticity with robust methods like the variance-weighted $F$-approximation and its refinements (\cite{Welch1951, Alexander1994, Pilkington2024}); employing developed multiple comparison procedures (\cite{Tukey1949, Hsu1996, Mee2023}); and utilizing residual analysis techniques for model diagnostics (\cite{Anscombe1963, Montgomery2017}).
Numerous theoretical achievements and applications of FE models can be referenced in \citet{Wilk1955}, \citet{Pietraszek2016}, 
and \citet{Passerine2024}. 
However, some challenges arise in experimental observations: small sample sizes by design and frequent inherent uncertainty, rather than genuine randomness, render probability-based statistical tools less applicable. Empirical evidence suggests that real-world data often fail to satisfy core assumptions in probability theory (including normality, frequency stability, independence, and randomness). When the sample size is limited, the data are imprecise, or uncertainty originates from human belief and cognition, probability theory may no longer be adequate in real-world problems, as noted in numerous studies, such as carbon spot price (\citet{LiuLi2024}), $\text{O}_{3}$ concentration (\citet{XiaLi2025}),  epidemic spreading (\citet{XieLio2024}),  and population (\citet{YangLiu2024}). 

The existence of these problems necessitates novel measures to provide a more suitable modeling framework. To solve the challenges mentioned above, \citet{Liu2007} first proposed an uncertain measure grounded in the axioms of normality, duality, subadditivity, and product, further establishing uncertainty theory. As a notable practical extension of this theory, \citet{Liu2010} pioneered the use of uncertain statistics to address the collection, analysis, and interpretation of uncertain data in real-world applications. Currently, uncertain regression models are popular in the field of uncertain statistics. Uncertain statistical inference for these models encompasses two core tasks: parameter estimation and hypothesis testing.  For uncertain regression models, the estimation methods include least squares (\citet{YaoLiu2018}), moment (\citet{LioLiu2018}),  uncertain maximum likelihood (\citet{LioLiu2020}), least absolute deviations estimation (\citet{LiuYang2019}), and ridge estimation (\citet{ZhangGao2024b}). For some complex regression models, local polynomial and B-spline estimation, as well as two-step estimation, are employed (\citet{DingZhang2021}, \citet{ZhangLi2025a}). On the other hand, hypothesis testing under uncertainty is employed to validate the results. Unlike probabilistic methods that rely on conventional statistics, this approach directly utilizes raw data. \citet{YeLiu2022} introduced the framework for testing regression coefficients in uncertain models, demonstrating its superiority in handling non-probabilistic residuals. Further advances by \citet{YeLiu2023b} extended the method to multivariate settings, demonstrating robust performance when stochastic tests fail. Moreover, \citep{ZhangLi2025b} conducted uncertainty-based homogeneity and common tests for multiple finite populations, thereby broadening the applicability of inference under uncertainty. 

Based on the above analysis, numerous valuable research contributions have been made to the field of uncertain regression models. Uncertain regression models based on experimental design often present a distinct yet closely related scenario---it involves
different treatment combinations, balanced/unbalanced structures, and prespecified
potential interaction effects. 
Nevertheless, the underlying uncertainty in such designs has received little attention, and existing uncertain regression toolboxes do not accommodate these features.
To bridge this gap, this paper systematically introduces uncertainty theory into the FE framework of experimental design. To validate effectiveness,
we reanalyse ``presumed random'' data with uncertain models. The main
innovations are summarised as follows:

(i) A system of uncertain fixed effects (UFE) models is first proposed for both single- and two-factor designs, thereby providing a dedicated theoretical framework for analyzing uncertain experimental data.

(ii) Parameter estimation methods, integrating generalized linear models with uncertain vectors, are proposed for unbalanced designs, which overcame the limitation of traditional least squares and achieved explicit estimation and derived distributions.

(iii) The homogeneity test of effects in UFE models is conducted by adjusting data for fixed effects to maintain inter-group independence, thereby 
ensuring reliable results.

(iv) Through three diverse real examples spanning different designs, we substantiate the suitability and practical utility of the proposed methodologies.
\begin{table}[ht]
\centering\small
\renewcommand{\arraystretch}{1}
\caption{Comparison of probabilistic and uncertain fixed effects (FE) models.}
\label{tab:comparison}
\begin{tabular}{>{\centering\arraybackslash}m{2.6cm}>{\centering\arraybackslash}m{5.95cm}>{\centering\arraybackslash}m{5.95cm}}
\hline
\textbf{Aspect} & \textbf{Probabilistic  FE Model} & \textbf{Ucertain FE Model} \\
\hline
Measure & Probability measure  & Uncertain measure  \\
Error & Normal distribution (thin-tailed) & Normal uncertainty distribution (heavy-tailed) \\
Hypotheses & Test statistics ($F/t$ test) & Raw data, inverse uncertainty distribution \\
Sensitivity in\linebreak small samples & Low (low statistical power) & High (stricter rejection criterion) \\
Applicable\linebreak scenarios & Large samples, ideal assumptions satisfied & Small samples, violated assumptions, subjective uncertainty \\
\hline
\end{tabular}
\end{table}

As summarized in Table \ref{tab:comparison}, the UFE model advances beyond traditional probabilistic FE models by fundamentally not relying on frequency stability or randomness assumptions. This shift confers several key advantages: (a) it offers enhanced tolerance to extreme values through its heavy-tailed normal uncertainty distribution (\citet{GuoGuoThiart2010}), making it better suited for small-sample scenarios; (b) it employs a stricter rejection criterion based on direct evaluation of raw data against the null hypothesis distribution, providing higher sensitivity to detect systematic differences often missed by traditional methods; and (c) its applicability in scenarios with subjective uncertainty or unverifiable assumptions complements traditional approaches, particularly when data are limited.

The remainder of this paper is organized as follows.  Section \ref{sec3} establishes an uncertain fixed-effects model for a single-factor design, investigates parameter estimation, and explores the testing process for related hypotheses. The parameter estimation and uncertain hypotheses across two factors are analyzed in Section \ref{sec4}, accounting for interaction effect and replication variability. Section  \ref{sec5} demonstrates the practical effectiveness of the proposed methods through three real examples, and Section  \ref{sec6} provides a brief conclusion.
\section{Uncertain fixed-effects model with single factor}\label{sec3}

In this section, we first review the basic concept of an uncertain measure  (\citet{Liu2023b}), and then propose a fixed-effects model with uncertainty for single-factor designs. 
\subsection{Basic concepts}
Let $\Gamma$ is a nonempty set,  and $\mathcal{L}$ be a $\sigma$-algebra over $\Gamma$.  A map $\mathscr{M}:  \mathcal{L}\mapsto [0,1]$ is called an uncertain measure, satisfying four axioms:

(i) (Normality axiom) $\mathscr{M}\{\Gamma\} = 1$ for the universal set $\Gamma$.

 (ii) (Duality  axiom)  $\mathscr{M}\{\Lambda\} +\mathscr{M}\{\Lambda^c\} = 1$ for any event $\Lambda$. 

 (iii) (Subadditivity  axiom) For every countable sequence of events $\Lambda_1$,$\Lambda_2$,..., we have
  \begin{equation*}
 			\mathscr{M}\left\{\bigcup\limits_{i=1}^\infty \Lambda_i\right\} \leq \sum\limits_{i=1}^\infty \mathscr{M}\{\Lambda_i\}.
 		\end{equation*} 

(iv) (Product  axiom) Let ($ \Gamma_i$, $\mathcal{L}_i$, $\mathscr{M}_i$) be a sequence of uncertainty spaces and $\Lambda_i$ are arbitrarily chosen events from $\mathcal{L}_i$ for $i$ = 1, 2,..., then the product uncertain measure $\mathscr{M}$ is an uncertain measure satisfying
 	$
 			\mathscr{M}\{\prod\limits_{i=1}^\infty \Lambda_i\} = \bigwedge\limits_{i=1}^\infty \mathscr{M}\{\Lambda_i\}.
	$

 For any Borel set $B \subseteq \mathbb{R}$,  $\xi$ is called an uncertain variable if the set $\{\gamma \in \Gamma \mid \xi(\gamma) \in B\} \in \mathcal{L}$.  The function $\Phi(z; {\bm\theta}) = \mathscr{M}\{\xi \leq z\}(z\in \mathbb{R})$ is called an uncertainty distribution with paramater ${\bm\theta}(\in \Theta)$ of $\xi$. Under regular conditions, the inverse function of $\Phi(z;{\bm\theta})$ is called  an inverse uncertainty distribution, denoted by $\Phi^{-1}(\alpha; {\bm\theta})$ for $\alpha \in (0,1)$.  An uncertain variable $\xi$ is  normal if its uncertainty distribution and inverse function are defined as:
 \vspace{0cm}
 \begin{align*}
 \Phi(z;\mu,\sigma) &= \left(1 + \exp\left(\frac{\pi(\mu-z)}{\sqrt{3}\sigma}\right)\right)^{-1}, \quad z \in \mathbb{R},\\
 \Phi^{-1}(\alpha;\mu,\sigma) &= \mu + \frac{\sqrt{3}\sigma}{\pi}\ln\left(\frac{\alpha}{1-\alpha}\right), \quad \alpha \in (0,1),
 \end{align*}
 where $e$ and $\sigma$ are real numbers with $\sigma > 0$, denoted by  $\xi \sim \mathscr{N}(e, \sigma)$. When $e=0$ and $\sigma=1$, $\xi$ is a standard normal uncertainty distribution.	 A regular uncertainty distribution family $\{\Phi(z; \theta) : \theta \in \Theta\}$ is said to be nonembedded for $\theta_0 \in \Theta$ at level $\alpha$ if
\[
\Phi^{-1}(\beta; \theta_0) > \Phi^{-1}(\beta; \theta) \quad \text{or} \quad \Phi^{-1}(1-\beta; \theta) > \Phi^{-1}(1-\beta; \theta_0), \mbox{ some } \theta \in \Theta, 0 < \beta \leq \alpha/2.
\]
The normal uncertainty distribution family $\{\mathscr{N}(e, \sigma) : e \in \mathbb{R}, \sigma > 0\}$ is nonembedded for any $(e_0, \sigma_0)$ at a significance level $\alpha$ (\cite{YeLiu2022}).

\subsection{Uncertain fixed-effects model}
 In an experiment, suppose there is only one factor $A$ with $r$ levels $A_1, A_2, \ldots, A_r$.  Let $m_i$ be the repeated times at level $A_i$ for $i = 1, 2, \ldots, r$, with a total of $N=\sum_{i=1}^{r} m_i$.
If $m_1 = m_2 = \cdots = m_r = m$, it is called a balanced design. Otherwise, it is unbalanced. Let $z_{ij}(i=1,\ldots,r,j=1,\ldots, m_i)$ be observations of the \( j \)th replicate at the \( i \)th level. Based on this, an uncertain model of a single-factor design is defined as follows
\begin{eqnarray}
\label{Mmodel}
  z_{ij} = \mu_i + \varepsilon_{ij}, \quad  i = 1, 2, \ldots, r; \ j = 1, 2, \ldots, m_i,  
\end{eqnarray}
where $\mu_i$ is the mean of $A_i$, and $\varepsilon_{ij}$ are  mutually independent and identically distributed uncertain errors for the $j$th experiment at the $i$th level.

\begin{thm}\label{thm1}
 In the model (\ref{Mmodel}), let $\hat{\mu}_i(i=1,...,r)$ be the least-squares estimator of $\mu_i$. Then, $\hat{\mu}_i = \frac{1}{m_i} \sum_{j=1}^{m_i} z_{ij}$.    
\end{thm}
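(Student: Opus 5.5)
The least-squares estimator is defined as the minimizer of the residual sum of squares
\[
Q(\mu_1,\ldots,\mu_r)=\sum_{i=1}^{r}\sum_{j=1}^{m_i}(z_{ij}-\mu_i)^2
\]
over $(\mu_1,\ldots,\mu_r)\in\mathbb{R}^r$, where the observations $z_{ij}$ are treated as fixed real numbers. This is the usual convention in uncertain regression (as in \citet{YaoLiu2018}). The plan is to show that $Q$ splits into $r$ independent one-dimensional problems and to solve each one explicitly.

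First, write $Q=\sum_{i=1}^r Q_i(\mu_i)$ with $Q_i(\mu_i)=\sum_{j=1}^{m_i}(z_{ij}-\mu_i)^2$. Each summand depends on its own parameter only, so minimizing $Q$ jointly is equivalent to minimizing each $Q_i$ separately over $\mu_i\in\mathbb{R}$.

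Second, for fixed $i$, set the derivative to zero:
\[
\partial Q_i/\partial\mu_i=-2\sum_{j=1}^{m_i}(z_{ij}-\mu_i)=0,
\]
which gives $\mu_i=\frac{1}{m_i}\sum_{j=1}^{m_i}z_{ij}$.

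Third, confirm that this stationary point is the unique global minimizer. Writing $\bar z_i=\frac{1}{m_i}\sum_j z_{ij}$, expand
\[
Q_i(\mu_i)=\sum_j(z_{ij}-\bar z_i)^2+m_i(\bar z_i-\mu_i)^2 .
\]
The cross term vanishes because $\sum_j(z_{ij}-\bar z_i)=0$. Since $m_i\ge 1$, the second term is strictly positive unless $\mu_i=\bar z_i$. This avoids any appeal to second-order conditions.

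There is no real obstacle here. The only point needing care is the conceptual one: the estimator comes from a deterministic optimization over the observed data, so neither the uncertain measure nor independence of the $\varepsilon_{ij}$ is used. The result therefore holds equally for balanced and unbalanced designs, with no assumption on the $m_i$ beyond $m_i\ge1$.
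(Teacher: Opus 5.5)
Your proposal is correct and takes the paper's approach: the paper's proof only sets up the least-squares problem $\hat{\bm\mu}=\mbox{argmin}_{\bm\mu}\sum_{i=1}^r\sum_{j=1}^{m_i}(z_{ij}-\mu_i)^2$ and states the result. You carry out the steps it omits, namely separating the problem by level and checking that the minimizer is unique by completing the square, and these steps are valid.
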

\begin{proof}
By employing the least-squares method (\cite{YaoLiu2018}), $\hat{\mu}_i(i=1,...,r)$ is obtained by solving
\[
\bm{\hat\mu} = \mbox{argmin}_{\bm\mu}\sum_{i=1}^r \sum_{j=1}^{m_i} \left( z_{ij} - \mu_i \right)^2,
\]
where \(\bm{\hat\mu}=(\hat\mu_1,\hat\mu_2,...,\hat\mu_r)\).
\end{proof}

To perform subsequent confidence interval construction 
and hypothesis testing, we assume $\varepsilon_{ij} \sim \mathscr{N}(0, \sigma_0)$, 
where $\sigma_0$ is determined and validated through the residual analysis 
procedure described in Remark~\ref{rem1}.

\begin{remark} \label{rem1} { After obtaining the point estimates, we verify the distribution of all residuals $\hat{\varepsilon}_{ij} = z_{ij} - \hat{\mu}_i$ in model (\ref{Mmodel}) to justify ${\varepsilon}_{ij} \sim \mathscr{N}(0, \sigma_{0})$.
First, it is confirmed that  $\hat{\varepsilon}_{ij}\sim \mathscr{N}(0, \sigma_{i0})$ using Corollary~1 in \cite{YeLiu2022}.
Then, following \cite{ZhangLi2025b}, two tests are conducted on the standard deviation: (i) an uncertain homogeneity test $\sigma_1 = \sigma_2 = \cdots = \sigma_r=\sigma$; (ii)  an uncertain common test  $\sigma=\sigma_0$ under (i) holds.
Only after these validations can uncertain confidence intervals and hypothesis tests be performed. The same procedure applies to the subsequent models and thus is not repeated hereafter.}
\end{remark}

Since $\varepsilon_{ij} \sim \mathscr{N}(0, \sigma_0)$, it is obvious that  $z_{ij} \sim \mathscr{N}(\mu_i,\sigma_0)$. By the operational law,
$\hat{\mu}_i$ has the uncertainty distribution $\mathscr{N}(\mu_i,\sigma_0)$, i.e.,
\[
\Phi(z; \mu_i, \sigma_0) = \left(1 + \exp\left(\frac{\pi(\mu_i - z)}{\sqrt{3}\sigma_0}\right)\right)^{-1}.
\]
Then,  following \citet{LioLiu2018}, the $\alpha$ (e.g., 95\%) confidence interval of $\mu _i$ is
\begin{equation}
\label{muici}
CI(\mu_i)=\left[\Phi^{-1}\left(\frac{1-\alpha}{2}; \hat{\mu}_i, \sigma_0\right),\Phi^{-1}\left(\frac{1+\alpha}{2}; \hat{\mu}_i, \sigma_0\right)\right]=\hat{\mu}_i \pm \sigma_0\frac{\sqrt{3}}{\pi}\ln_{}{\left(\frac{1+\alpha}{1-\alpha}\right)}.
\end{equation}

Let \(w_i\) be the weight of the total number of experiments contributed by the replications of level \(A_i\), given by $w_i = m_i/N(i = 1, 2, \dots, r)$. 
Denote 
\[
\mu = \sum_{i=1}^r w_i \mu_i, \quad \sum_{i=1}^r w_i = 1.\]
We call $a_i = \mu_i - \mu (i = 1, 2, \dots, r)$ the effect of level \(A_i\), which is the deviation of the mean of level \(A_i\) from the overall mean. Obviously, $\sum_{i=1}^r w_i a_i = 0.$ Thus, the uncertain model  (\ref{Mmodel}) is equivalent to the model
\begin{eqnarray}
\label{Fmodel}
\left\{\begin{array}{ll}
\displaystyle z_{ij} = \mu + a_i + \varepsilon_{ij},  i = 1, 2, \dots, r, \ j = 1, 2, \dots, m_i,\\
\displaystyle \sum_{i=1}^rw_ia_i=0,  \varepsilon_{ij}\sim\mathscr{N}(0,\sigma_0) \text{ are mutually independent}.
\end{array}
\right.
\end{eqnarray}
We refer to the model (\ref{Fmodel}) as an uncertain fixed-effects (UFE) model based on a single-factor design, denoted by the SUFE model. 

\begin{thm}\label{thm2}
In the SUFE model (\ref{Fmodel}), let  $\hat\mu$ and $\hat a_i(i=1,\ldots, r)$ be the least-squares estimators of $\mu$ and $a_i$. Then,
\begin{equation}
\label{mu}
\hat{\mu} = \frac{1}{N} \sum_{i=1}^r \sum_{j=1}^{m_i} z_{ij}, \quad \hat{a}_i= \left( \frac{1}{m_i} - \frac{1}{N} \right)\sum_{j=1}^{m_i}  z_{ij} -\frac{1}{N} \sum_{k \neq i} \sum_{j=1}^{m_k} z_{kj}.
\end{equation}
\end{thm}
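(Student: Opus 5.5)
The plan is to treat the SUFE model (\ref{Fmodel}) as a constrained least-squares problem and solve it with a Lagrange multiplier. Following \cite{YaoLiu2018}, $(\hat\mu,\hat a_1,\ldots,\hat a_r)$ minimizes
\[
Q(\mu,a_1,\ldots,a_r)=\sum_{i=1}^r\sum_{j=1}^{m_i}\left(z_{ij}-\mu-a_i\right)^2
\quad\text{subject to}\quad \sum_{i=1}^r w_i a_i=0 .
\]
The constraint is needed: without it the map $(\mu,a_i)\mapsto \mu+a_i$ is not injective, so the minimizer would not be unique.

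First I would form $L=Q+2\lambda\sum_i w_i a_i$ and differentiate. Setting $\partial L/\partial a_i=0$ gives
\[
\sum_{j}(z_{ij}-\mu-a_i)=\lambda w_i .
\]
Setting $\partial L/\partial \mu=0$ gives
\[
\sum_{i}\sum_{j}(z_{ij}-\mu-a_i)=0 .
\]
Summing the first set of equations over $i$ and using $\sum_i w_i=1$ together with the second equation forces $\lambda=0$. Hence $m_i(\mu+a_i)=\sum_j z_{ij}$, that is, $\hat\mu+\hat a_i=\bar z_{i\cdot}$, which is exactly the estimator $\hat\mu_i$ of Theorem \ref{thm1}.

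Next I would apply the constraint. Multiplying $\hat\mu+\hat a_i=\hat\mu_i$ by $w_i=m_i/N$ and summing, $\sum_i w_i\hat a_i=0$ yields
\[
\hat\mu=\sum_i w_i\hat\mu_i=\frac1N\sum_i\sum_j z_{ij}.
\]
Then
\[
\hat a_i=\hat\mu_i-\hat\mu=\frac1{m_i}\sum_j z_{ij}-\frac1N\sum_j z_{ij}-\frac1N\sum_{k\neq i}\sum_j z_{kj},
\]
which is formula (\ref{mu}). Alternatively, I could argue by invariance: because of the equivalence between (\ref{Mmodel}) and (\ref{Fmodel}) under the reparametrization $\mu_i=\mu+a_i$, $\mu=\sum_i w_i\mu_i$, the constrained minimizer corresponds to the unconstrained minimizer of Theorem \ref{thm1}.

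The only point requiring care is showing that this stationary point is the global minimizer. I would handle this by noting that $Q$ is a convex quadratic and the constraint is linear, so stationarity is sufficient for a minimum. Uniqueness follows because the map $(\mu,a)\mapsto(\mu+a_i)_i$ restricted to the constraint hyperplane is a bijection onto $\mathbb{R}^r$, and Theorem \ref{thm1} gives a unique minimizer in the $\mu_i$ coordinates.
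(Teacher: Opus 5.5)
Your proposal is correct and follows essentially the paper's route: minimize the least-squares criterion to obtain $\hat\mu$ as the grand mean and $\hat a_i=\bar z_{i\cdot}-\hat\mu$, then expand $\hat a_i$ into the form (\ref{mu}). Your Lagrange-multiplier step (showing $\lambda=0$) and your convexity and uniqueness remarks make explicit the constraint $\sum_i w_i a_i=0$, which the paper's displayed argmin omits but which is needed for the minimizer to be unique.
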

\begin{proof}
By employing the least-squares method,  $\hat\mu$ and $\hat a_i(i=1,\ldots, r)$ are obtained by solving
\[
(\hat{\mu},\hat{\mathbf{a}}) =\mbox{argmin}_{\mu, \bf a} \sum_{i=1}^r \sum_{j=1}^{m_i} \left( z_{ij} - \mu- a_i \right)^2,
\]
where \(\hat{\mathbf{a}}=(\hat{a}_1,\hat{a}_2,...,\hat{a}_r)\). We can obtain 
\[
\hat{\mu} = \frac{1}{N} \sum_{i=1}^r \sum_{j=1}^{m_i} z_{ij}, \quad \hat{a}_i= \frac{1}{m_i} \sum_{j=1}^{m_i} z_{ij} - \frac{1}{N} \sum_{i=1}^{r} \sum_{j=1}^{m_i} z_{ij}.
\]
To facilitate the direct derivation of their uncertainty distributions and to provide a basis for constructing uncertain confidence intervals, we express each estimator as a sum of independent uncertain normal variables, which yields (\ref{mu}).
\end{proof}

By the operational law, it follows that
\(\hat{\mu}\sim \mathscr{N}(\mu, \sigma_0)\) and \(\hat{a}_i\sim \mathscr{N}(a_i, 2\left(1 - m_i/N\right) \sigma_0)\), i.e.,
\begin{align*}
\Phi(z; \mu, \sigma_0) &= \left( 1 + \exp\left( \frac{\pi(\mu - z)}{\sqrt{3}\sigma_0} \right) \right)^{-1},\\
\Phi\left(z; a_i, 2\left(1 - \frac{m_i}{N}\right)\sigma_0\right)& = \left( 1 + \exp\left( \frac{\pi(a_i - z)}{2\sqrt{3}\left(1 - \frac{m_i}{N}\right)\sigma_0} \right) \right)^{-1}.
\end{align*}
Then,  the \(\alpha\) confidence intervals of \(\mu\)  and  \(a_i\)  are
\begin{align}
CI(\mu)&=\left[ \Phi^{-1} \left( \frac{1-\alpha}{2}; \hat{\mu}, \sigma_0 \right), \Phi^{-1} \left( \frac{1+\alpha}{2}; \hat{\mu}, \sigma_0 \right) \right] = \hat{\mu} \pm \sigma_0 \frac{\sqrt{3}}{\pi} \ln \left( \frac{1 + \alpha}{1 - \alpha} \right),\label{muci}\\
CI(a_i) &= \left[ \Phi^{-1} \left( \frac{1-\alpha}{2}; \hat{a}_i, \sigma_{a_i} \right), \Phi^{-1} \left( \frac{1+\alpha}{2}; \hat{a}_i, \sigma_{a_i} \right) \right] = \hat{a}_i \pm \sigma_{a_i} \frac{\sqrt{3}}{\pi} \ln \left( \frac{1 + \alpha}{1 - \alpha} \right),\label{aici} 
\end{align}
where  \(\sigma_{a_i} = 2\left(1 - m_i/N\right)\sigma_0\).

\subsection{Uncertain hypotheses}
In this subsection, we investigate the equality of the means in the uncertain model of a single-factor design and equality of the effects in uncertain fixed-effects models. In this and subsequent sections concerning hypothesis testing, for simplicity, only unknown parameters are retained in the  inverse distribution according to two cases: (i) if the mean $e_0$ is known,  
$\Phi^{-1}(\alpha; e_0, \sigma_{j0})$ is replaced by 
$\Phi^{-1}(\alpha; \sigma_{j0})$, and (ii) if the standard deviation 
$\sigma_0$ is known,  $\Phi^{-1}(\alpha; a_{j0}, \sigma_0)$ is rewritten by $\Phi^{-1}(\alpha; a_{j0})$.

For the model (\ref{Mmodel}), the homogeneity hypotheses of means \(\mu_i\):
\begin{eqnarray*}
H_a: \mu_1 = \mu_2 = \cdots = \mu_r \text{ versus }
H_b: \mu_1, \mu_2, \ldots, \mu_r \text{ are not all equal}.
\end{eqnarray*}
If the null hypothesis $H_a$ is rejected at the significance level $\alpha$, the levels of factor $A$ differ significantly. Otherwise,  there is no significant difference.   Since $a_i = \mu_i - \mu$ $(i = 1, 2, \dots, r)$, the hypotheses  $H_a$ is equivalent to the following case 
\begin{equation}
\label{homo}
H_0: a_1 = a_2 = \cdots = a_r \quad \text{versus} \quad H_1: a_1,\ldots, a_r \text{ are not all equal}. 
\end{equation}
Under the constraint \(\sum_{i=1}^r w_i a_i = 0\), if the null hypothesis holds, i.e., \(a_1 = a_2 = \cdots = a_r \triangleq a\), then the constraint becomes \(a \sum_{i=1}^r w_i = a \cdot 1 = 0\), which implies \(a = 0\). Under the constraint, the hypothesis \(a_1 = a_2 = \cdots = a_r\)  is equivaltent to  \(a_1 = a_2 = \cdots = a_r = 0\). Consequently,  the homogeneity test of the effects \(a_i\) is sufficient to determine whether all effects are zero. The process can be extended to both balanced and unbalanced designs.

Let $z_{ij}(i = 1, \ldots, r; j = 1, \ldots, m_i)$  be observations satisfying $z_{ij} \sim \mathscr{N}(\mu_i, \sigma_i)$. Since independent samples of $\hat{a}_i$ are not directly observable, we rely on the original data $z_{ij}$ to construct the rejection region. 
A natural adjustment based on $a_i = \mu_i - \mu$ is 
$
\tilde{z}_{ij} = z_{ij} - \hat{\mu},
$
where  $\hat{\mu} \sim \mathscr{N}(\mu, \sigma_0)$.  
However, the estimator $\hat{\mu}$ is dependent on groups, because $\tilde{z}_{ij}$ and $\tilde{z}_{kl}$ ($k \neq i$) become correlated through $\hat{\mu}$.  

{To preserve independence, $z_{ij}$ should be adjusted only by a fixed constant.  Denote $a_i^* = a_i + (\mu - \mu_0)$. We therefore replace $\hat{\mu}$ with a fixed constant $\mu_0$ that can be estimated from the data but is treated as known in the testing procedure, yielding
\[
\tilde{z}_{ij} = z_{ij} - \mu_0 =a_i + \mu - \mu_0+ \varepsilon_{ij} =a_i^*+\varepsilon_{ij}.
\]
Then, $\tilde{z}_{ij}$ are mutually independent and follow $\mathscr{N}(a_i^*, \sigma_i)$. }

\begin{thm}\label{thm3}
For the hypothesis problem (\ref{homo}), the rejection region at significance level $\alpha$ is
\begin{equation*}
\label{rejection-ai}
W\!=\!\left\{
\begin{aligned}
&\!(\tilde{z}_{i1}, \tilde{z}_{i2}, \dots, \tilde{z}_{im_i})\!:\!\exists~i \!\neq\!j~(i,j\!\in\!\{1,2,...,r\})\text{ such that at least } \alpha \text{ of indexes}\!\\
& p \text{'s}\text{ with }1 \leq p \leq m_i \text{ satisfy }\tilde{z}_{ip }< \Phi^{-1} (\alpha/2; a_{j0})\text{ or } \tilde{z}_{ip} > \Phi^{-1} (1-\alpha/2; a_{j0})
\end{aligned}
\right\},
\end{equation*}
where $\Phi^{-1}(\alpha; a_{j0}) = a_{j0} + \sigma_{i0}\frac{\sqrt{3}}{\pi}\ln\left(\frac{\alpha}{1-\alpha}\right)$ and $a_{i0}$ and $\sigma_{i0}$ are estimators of $a_i^*$,  $\sigma_i$.
\end{thm}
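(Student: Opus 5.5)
Under the null hypothesis of (\ref{homo}), all adjusted observations share the same uncertainty distribution, so any group's data may be tested against the distribution estimated from any other group. The proof therefore reduces to the single-population uncertain test of \cite{YeLiu2022}, applied pairwise.

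\textbf{Step 1: the null distribution of the adjusted data.} Under $H_0$, the constraint $\sum_{i=1}^r w_i a_i=0$ forces $a_1=\cdots=a_r=0$. Hence $a_1^*=\cdots=a_r^*=\mu-\mu_0$. Because $\mu_0$ is a fixed constant, the adjusted data $\tilde z_{ij}=a_i^*+\varepsilon_{ij}$ are mutually independent across and within groups, and $\tilde z_{ij}\sim\mathscr N(a_i^*,\sigma_i)$. By the homogeneity validation of Remark~\ref{rem1}, the scales coincide under $H_0$. So $H_0$ is equivalent to the statement that every group $(\tilde z_{i1},\dots,\tilde z_{im_i})$ is a sample from one common population $\mathscr N(a_{j0},\sigma_{i0})$. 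Here $a_{j0}$ can be estimated from any other group $j\ne i$, and it is treated as known in the test.

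\textbf{Step 2: the single-population test for each pair.} Fix a pair $i\ne j$ and consider the simple hypothesis that group $i$ follows the distribution with parameters $(a_{j0},\sigma_{i0})$. I would invoke the uncertain hypothesis test of \cite{YeLiu2022}, which applies because the normal uncertainty family is nonembedded at $(a_{j0},\sigma_{i0})$ at level $\alpha$. For this simple hypothesis, the test rejects at level $\alpha$ when at least $\alpha$ of the indices $p\in\{1,\dots,m_i\}$ give $\tilde z_{ip}$ outside $[\Phi^{-1}(\alpha/2;a_{j0}),\Phi^{-1}(1-\alpha/2;a_{j0})]$. The explicit formula for $\Phi^{-1}$ follows directly from the inverse normal uncertainty distribution with mean $a_{j0}$ and standard deviation $\sigma_{i0}$.

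\textbf{Step 3: combining the pairs.} $H_0$ fails if and only if some $a_i^*\ne a_j^*$. So $H_0$ is rejected as soon as some pairwise test rejects, which gives $W$ as the union over $i\ne j$ of the pairwise rejection regions.

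\textbf{Main obstacle: the level of the union.} The delicate step is showing that this union still has significance level $\alpha$ rather than an inflated level. I would argue as follows:
\begin{itemize}
\item Under $H_0$, each pairwise event concerns an independent block of variables, and all blocks share the same null distribution.
\item I would then use the product axiom together with the fact that, in uncertainty theory, the measure of a union of independent events equals the maximum of their measures. These events are independent because they are built from disjoint groups.
\item Each pairwise event has measure at most $\alpha$ by \cite{YeLiu2022}, so the maximum is also at most $\alpha$.
\end{itemize}
The events for different $j$ with the same $i$ involve the same block. Since they share the same null distribution under $H_0$, I would identify them with a single event and apply the bound to it.
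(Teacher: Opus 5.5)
Your Step~1 coincides with the paper's opening: the $\tilde z_{ij}=a_i^*+\varepsilon_{ij}$ are independent, $\tilde z_{ij}\sim\mathscr N(a_i^*,\sigma_i)$, and $H_0$ is equivalent to $a_1^*=\cdots=a_r^*$. After that the paper builds nothing further. It notes that the location family $\{\mathscr N(e,\sigma):e\in\mathbb R\}$, with the scale fixed, is nonembedded. It then invokes Theorem~4 of \cite{ZhangLi2025b}, a ready-made multi-population homogeneity test that gives $W$ and its level at once.

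You instead rebuild that theorem from the single-population test of \cite{YeLiu2022}. You take the pairwise tests, let $W$ be their union, and control the level by grouping the union by the tested block $i$. You then use the fact that the uncertain measure of a union of events from independent blocks is their maximum. This is the same union structure the paper later uses for Theorem~\ref{rejection region for abij}, where the level of the union is left implicit.

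The citation buys brevity. Your route is self-contained apart from Ye--Liu, and it shows why the $r(r-1)$ comparisons do not inflate the level, unlike the probabilistic case. The alternative side takes one more line: if $a_i^*\neq a_j^*$, $W$ contains that pair's region, so monotonicity and the pairwise property apply.

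One justification needs changing. You collapse the events for fixed $i$ and different $j$ into one because they ``share the same null distribution''. That does not license the step: each event depends on the number $a_{j0}$, not on group $j$'s distribution. With data-based estimates these centres differ (compare the intervals in Table~\ref{tab3}), and they would still differ under $H_0$. The events are then distinct, dependent subsets of block $i$, so the max rule says nothing about their union. Moreover, an interval centred at some $a_{j0}\neq a^*$ is exactly the off-centre situation that nonembeddedness is designed to flag.

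The identification is exact only under the convention the paper adopts: the plug-ins are treated as the true values of $a_j^*$, so $H_0$ forces $a_{10}=\cdots=a_{r0}$. State explicitly that you are using this; the paper hides the same idealization inside the citation.

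Finally, scale homogeneity from Remark~\ref{rem1} plays no role. Each pairwise region uses the tested group's own $\sigma_{i0}$ and borrows only the location from group $j$. So the phrase ``one common population $\mathscr N(a_{j0},\sigma_{i0})$'' should be dropped.
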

\begin{proof}
Since $\mu_0$ is a fixed constant in the testing procedure, the adjustments 
$\tilde{z}_{ij} = a_i^*+\varepsilon_{ij},$
are independent and $\tilde{z}_{ij}\sim \mathscr{N}(a_i^*,\sigma_i)$.  Thus, the hypothesis  (\ref{homo})  is equivalent to $a_1^* = a_2^* = \cdots = a_r^*$.  Note that the normal uncertainty distribution family $\{\mathscr{N}(e, \sigma): e \in \mathbb{R}\}$ is nonembedded for all parameters at any significance level $\alpha$.  By Theorem 4 of \cite{ZhangLi2025b}, we obtain the 
rejection region $W$ at significance level $\alpha$.
\end{proof}

{

\begin{remark}
In the models (\ref{Mmodel}) and (\ref{Fmodel}), we use a single \(\sigma_0\) for estimating 
 various effects. It aims to simplify computation and aligns with the SUFE model's assumption of homogeneous standard deviations, thereby integrating information from all experimental units to yield more stable estimates. However, in the homogeneity test of means or effects, we use individual \(\sigma_i\) for each level to avoid the risk that a single \(\sigma_0\) might obscure true effects, thereby improving the precision of the tests and the reliability of the test results.
\end{remark} 

{
\begin{remark}
For all homogeneity tests, the rejection criterion of the null hypothesis \(H_0\) can be equivalently expressed via the acceptance interval (AI). Specifically, for a dataset \((z_{i1}, z_{i2}, \ldots, z_{im_i})\), the acceptance interval for a parameter \(\theta_{j0}\) is defined as:  
\[
AI(z_i; \theta_{j0}) = \left[ \Phi^{-1} \left( \alpha/2; \theta_{j0} \right), \Phi^{-1} \left( 1 - \alpha/2; \theta_{j0} \right) \right].
\]  
If any $z_i$ has more than $\alpha m_i$ data outside its corresponding $AI(z_i; \theta_{j0})$, $H_0$ is rejected.
\end{remark}
}

\section{Uncertain fixed-effects model with two factors}\label{sec4}
Let \(A\) and \(B\) be two factors of interest in an experiment, each having \(A_1, A_2, \dots, A_r\) and \(B_1, B_2, \dots, B_s\), where \(r\) and \(s\) are the number of levels. Each combination pairs one level of $A$ with one level of $B$, written as $A_i B_j$ ($i=1,2,\ldots,r, j=1,2,\ldots,s$). The experiment is repeated \(m_{ij}\) times for each combination $A_i B_j$, with a total of \(N = \sum_{i=1}^{r} \sum_{j=1}^{s} m_{ij}\). Let \(z_{ijl}\) denote observations of the \(l\)th\((l=1,2,\ldots, m_{ij})\) unit of \(A_iB_j\), and all \(z_{ijl}\) from distinct \(A_iB_j\) are independent. We consider two cases in the following subsections: (i) two-factor design without interaction, and (ii) two-factor design with interaction.
\subsection{Two-factor design without interaction}\label{sec42}
In the two-factor model, assume the two factors are independent and that their effects are additive, with no interactions.
Let  \(\mu_{i \cdot }(i=1,2,\ldots,r)\) and \(\mu_{ \cdot  j}(j=1,2,\ldots,s)\) be the mean  of \(A_i\) and \(B_j\), and \(\mu_{  }\) be the overall mean. Thus, the mean of a combination $A_i B_j$ is expressed as:  
   \[
   \mu_{ij} = \mu + a_{i} + b_{j}, i=1,2,\ldots,r, j=1,2,\ldots,s,
   \] 
where \(a_{i} = \mu_{i\cdot} - \mu\), and \(b_{j} = \mu_{\cdot j} - \mu\) are the main effects of factors \(A\) and  \(B\).
Denote \( m_{i\cdot} = \sum_{j=1}^{s} m_{ij}\) and \(m_{\cdot j} = \sum_{i=1}^{r} m_{ij}\). Let \( w_{i\cdot} \) and \( w_{\cdot j} \) be the proportion of total observations that belong to \(A_i\) and \(B_j\); that is, \( w_{i\cdot} = m_i./{N}, w_{\cdot j} = m_{\cdot j}/{N} \). In the case of a balanced design, \( w_{i\cdot} = 1/{r}, w_{\cdot j} = 1/{s} \). Then, a UFE model without interaction is defined as follows:
\begin{eqnarray}
\label{two-factor model without interaction}
\left\{\begin{array}{ll}
&\displaystyle z_{ijl} = \mu_{  } + a_{i } + b_{  j} + \varepsilon_{ijl}, i=1,2,\ldots,r, j=1,2,\ldots,s, l=1,2,\ldots,m_{ij},\\
&\displaystyle \sum_{i=1}^r w_{i\cdot}a_{i } = 0, \sum_{j=1}^s w_{\cdot j}b_{  j} = 0, \varepsilon_{ijl} \sim \mathscr{N}(0, \sigma_0) \text{ are mutually independent}.
\end{array}
\right.
\end{eqnarray}

\begin{thm}\label{thm5}{
In the model (\ref{two-factor model without interaction}) with a balanced design, i.e.,
$m_{ij}=m$$(i=1,\ldots, r,j=1,\ldots,s)$, let $\hat\mu$, $\hat a_i$, and $\hat b_j$be the least-squares  estimators of $\mu$, $a_i$, and $b_j$. Then,}
\begin{equation}
\label{multimu}
\hat{\mu}_{} = \frac{1}{rsm} \sum_{i=1}^r \sum_{j=1}^s  \sum_{l=1}^{m} z_{ijl},\quad
\hat{a}_i = \frac{r - 1}{rsm} \sum_{j=1}^{s} \sum_{l=1}^{m} z_{ijl} - \frac{1}{rsm} \sum_{k \neq i} \sum_{j=1}^{s} \sum_{l=1}^{m} z_{kjl},
\end{equation}
\begin{equation}
\label{bi}
\hat{b}_j = \frac{s - 1}{rsm} \sum_{i=1}^{r} \sum_{l=1}^{m} z_{ijl} - \frac{1}{rsm} \sum_{i=1}^{r} \sum_{k \ne j} \sum_{l=1}^{m} z_{ikl}.
\end{equation}
\end{thm}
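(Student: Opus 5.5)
The plan is to mirror the proof of Theorem~\ref{thm2}. The least-squares estimators are obtained by solving
\[
(\hat\mu,\hat{\mathbf a},\hat{\mathbf b})=\mbox{argmin}\sum_{i=1}^r\sum_{j=1}^s\sum_{l=1}^m\left(z_{ijl}-\mu-a_i-b_j\right)^2,
\]
subject to the side conditions of model (\ref{two-factor model without interaction}). In the balanced case $w_{i\cdot}=1/r$ and $w_{\cdot j}=1/s$, so these conditions reduce to $\sum_i a_i=0$ and $\sum_j b_j=0$. Without them the parameters are not identifiable: adding a constant to $\mu$ and subtracting it from every $a_i$ leaves the fit unchanged. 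I will therefore impose the constraints explicitly, using a Lagrangian with multipliers $\lambda$ (for the $a$'s) and $\eta$ (for the $b$'s).

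\textbf{Step 1 (normal equations).} Differentiate the Lagrangian with respect to $\mu$, each $a_i$ and each $b_j$. Writing $z_{\cdot\cdot\cdot}$, $z_{i\cdot\cdot}$, $z_{\cdot j\cdot}$ for the grand, row and column totals, this gives
\[
rsm\,\mu+sm\textstyle\sum_i a_i+rm\sum_j b_j=z_{\cdot\cdot\cdot},
\]
\[
sm(\mu+a_i)+m\textstyle\sum_j b_j+\lambda/2=z_{i\cdot\cdot},
\]
\[
rm(\mu+b_j)+m\textstyle\sum_i a_i+\eta/2=z_{\cdot j\cdot}.
\]

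\textbf{Step 2 (multipliers vanish).} Sum the $a_i$-equations over $i$ and use both constraints. The result is $rsm\mu+r\lambda/2=z_{\cdot\cdot\cdot}$. Comparing with the $\mu$-equation, which under the constraints reads $rsm\mu=z_{\cdot\cdot\cdot}$, forces $\lambda=0$. The same argument on the $b_j$-equations gives $\eta=0$.

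\textbf{Step 3 (solve).} With $\lambda=\eta=0$ the equations give
\[
\hat\mu=\frac{z_{\cdot\cdot\cdot}}{rsm},\qquad \hat a_i=\frac{z_{i\cdot\cdot}}{sm}-\hat\mu,\qquad \hat b_j=\frac{z_{\cdot j\cdot}}{rm}-\hat\mu.
\]
Convexity of the quadratic objective on the affine constraint set guarantees that this stationary point is the minimizer.

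\textbf{Step 4 (rewrite).} As in Theorem~\ref{thm2}, express each estimator as a combination of disjoint (hence independent) blocks of observations. Split $z_{\cdot\cdot\cdot}=z_{i\cdot\cdot}+\sum_{k\ne i}z_{k\cdot\cdot}$ and use
\[
\frac1{sm}-\frac1{rsm}=\frac{r-1}{rsm}.
\]
This yields (\ref{multimu}). The analogous split $z_{\cdot\cdot\cdot}=z_{\cdot j\cdot}+\sum_{k\neq j}z_{\cdot k\cdot}$, together with $\frac1{rm}-\frac1{rsm}=\frac{s-1}{rsm}$, yields (\ref{bi}).

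The main obstacle is Step 2: handling the non-identifiability correctly and showing the constraints are consistent with the unconstrained normal equations. Balance is used exactly there, because the cross terms $m\sum_j b_j$ and $m\sum_i a_i$ carry a common factor $m$. In an unbalanced design these cross terms would carry weights $m_{ij}$, and the equations would no longer decouple.
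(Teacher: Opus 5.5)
Your proof is correct and follows the paper's route: least-squares normal equations giving $\hat a_i = z_{i\cdot\cdot}/(sm)-\hat\mu$ and $\hat b_j = z_{\cdot j\cdot}/(rm)-\hat\mu$, then splitting the grand total into disjoint blocks. It is more careful than the paper, whose unconstrained $\mbox{argmin}$ has no unique minimizer until the side conditions you impose are added. One small correction to your closing remark: $\lambda=\eta=0$ does not use balance, since summing the $a_i$-equations always reproduces the $\mu$-equation plus a nonzero multiple of $\lambda$ (even with the weights $w_{i\cdot}$ of the unbalanced case), so balance actually enters in Step~3, where it makes the cross terms $m\sum_j b_j$ and $m\sum_i a_i$ vanish and decouples the equations.
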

\begin{proof}
By employing the least-squares method,  $\hat\mu$, $\hat a_i$, and $\hat b_j$$(i=1,\ldots, r,j=1,\ldots,s)$ are obtained by solving
\[
(\hat{\mu},\hat{\mathbf{a}},\hat{\mathbf{b}}) =\mbox{argmin}_{\mu, \bf a, \bf b} \sum_{i=1}^r \sum_{j=1}^s \sum_{l=1}^{m} \left( z_{ijl} - \mu_{} - a_{i} - b_{ j} \right)^2,
\]
where \(\hat{\mathbf{a}}=(\hat{a}_1,\hat{a}_2,\ldots,\hat{a}_r)\) and \(\hat{\mathbf{b}}=(\hat{b}_1,\hat{b}_2,\ldots,\hat{b}_s)\). Hence,
\begin{equation}\label{ai,bj}
\hat{a}_i = \frac{1}{sm} \sum_{j=1}^{s} \sum_{l=1}^{m} z_{ijl} - \frac{1}{rsm} \sum_{i=1}^{r} \sum_{j=1}^{s} \sum_{l=1}^{m} z_{ijl}, 
\hat{b}_j = \frac{1}{rm} \sum_{i=1}^{r} \sum_{l=1}^{m} z_{ijl} - \frac{1}{rsm} \sum_{i=1}^{r} \sum_{j=1}^{s} \sum_{l=1}^{m} z_{ijl}.
\end{equation}
Further, we write each estimator as a sum of independent normal uncertain variables, so that the operational law can be applied to derive their uncertainty distributions.
\end{proof}

It follows from  the operational law 
that \(\hat{\mu}\sim\mathscr{N}(\mu, \sigma_0)\), \(\hat{a}_i\sim\mathscr{N}(a_i, 2(1-1/r)\sigma_0)\), and \(\hat{b}_j\sim\mathscr{N}(b_j, 2(1-1/s)\sigma_0)\). Then, the \(\alpha\) confidence intervals of \(\mu,a_i\) and \(b_j\) are
\begin{equation}
\label{multimuci}
CI(\mu)=\hat{\mu}_{} \pm \sigma_0 \frac{\sqrt{3}}{\pi} \ln \left( \frac{1 + \alpha}{1 - \alpha} \right),
\end{equation}
\begin{equation}
\label{aibi}
CI(a_i)=\hat{a}_{i} \pm \sigma_{a_i} \frac{\sqrt{3}}{\pi} \ln \left( \frac{1 + \alpha}{1 - \alpha} \right),
\quad
CI(b_j)=\hat{b}_{ j} \pm \sigma_{b_j} \frac{\sqrt{3}}{\pi} \ln \left( \frac{1 + \alpha}{1 - \alpha} \right),
\end{equation}
where \(\sigma_{a_i}=2(1-1/r)\sigma_0\) and \(\sigma_{b_j}=2(1-1/s)\sigma_0\).

{ Theorem \ref{thm5} is only available to the balanced case, but not to the unbalanced one. Therefore, in the case of imbalance, the Lagrangian multiplier method is proposed to estimate the parameters in an unbalanced model (\ref{two-factor model without interaction}). Let \( \mathbf{Z} \) be an \( N \times 1 \) response vector, 
\( \mathbf{X} \) be an \( N \times p \) matrix  for $p =  r + s + 1 $, that is,
\begin{align}
\mathbf{Z} &= (z_{111}, \ldots, z_{11m_{11}}, \ldots, z_{1s1}, \ldots, z_{1sm_{1s}}, \dots ,z_{r11}, \ldots, z_{r1m_{r1}},  \ldots, z_{rs1}, \ldots, z_{rsm_{rs}})^T, \nonumber \\[0.5em]
\mathbf{X} &=(\boldsymbol{x}_{111}, \ldots, \boldsymbol{x}_{11m_{11}},  \ldots, \boldsymbol{x}_{1s1}, \ldots, \boldsymbol{x}_{1sm_{1s}}, \dots , \boldsymbol{x}_{r11}, \ldots, \boldsymbol{x}_{r1m_{r1}}, \ldots, \boldsymbol{x}_{rs1}, \ldots, \boldsymbol{x}_{rsm_{rs}})^T \nonumber
\end{align}
such that  $\boldsymbol{x}_{ijl}\boldsymbol{\beta} = \mu + a_i + b_j$, where $\boldsymbol{x}_{ijl} = (1 , 0, \dots, 1, \dots, 0 , 0, \dots, 1, \dots, 0)_{1\times p}$, corresponding to parameters $\boldsymbol{\beta} = (\mu, a_1, \ldots, a_r, b_1, \ldots, b_s)^T$. 
Denote $\boldsymbol{\varepsilon} = (\varepsilon_1, \ldots, \varepsilon_N)^T$. Thus, the model (\ref{two-factor model without interaction}) is equivalent to the following system
\begin{eqnarray}
\label{matrix without}
 \mathbf{Z}=\mathbf{X}\boldsymbol{\beta}+\boldsymbol{\varepsilon},\quad
 \mathbf{C}\boldsymbol{\beta} = \mathbf{d},
\end{eqnarray}
where 
\[
\mathbf{C} = 
\begin{bmatrix}
0 & w_{1\cdot} & \cdots & w_{2\cdot} & 0 &  \cdots& 0 \\
0 & 0 &  \cdots& 0 & w_{\cdot1} &  \cdots& w_{\cdot2}
\end{bmatrix}_{2\times p},\quad
\mathbf{d} = 
(0,0)^T.
\]}

\begin{thm}\label{upe}
In the model (\ref{matrix without}) with an unbalanced design, let $\boldsymbol{\hat{\beta}}$ be the least-squares estimators of the parameter vector $\boldsymbol{\beta}$. Then,  
\begin{equation}\label{hatbeta}
\boldsymbol{\hat{\beta}}=(\mathbf{X}^T\mathbf{X})^+\mathbf{X}^T\mathbf{Z} -\frac{1}{2} (\mathbf{X}^T\mathbf{X})^+\mathbf{C}^T\boldsymbol{\hat{\lambda}},
\end{equation}
where $\boldsymbol{\hat{\lambda}}$ is the vector of estimated Lagrange multipliers, and  $A^+$ denotes the generalized inverse of matrix $A$.
\end{thm}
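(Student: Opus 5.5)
The plan is to treat the estimation as an equality-constrained least-squares problem and solve it with Lagrange multipliers. We minimize $S(\boldsymbol\beta)=(\mathbf Z-\mathbf X\boldsymbol\beta)^T(\mathbf Z-\mathbf X\boldsymbol\beta)$ subject to $\mathbf C\boldsymbol\beta=\mathbf d$. To do this, introduce the Lagrangian
\[
L(\boldsymbol\beta,\boldsymbol\lambda)=(\mathbf Z-\mathbf X\boldsymbol\beta)^T(\mathbf Z-\mathbf X\boldsymbol\beta)+\boldsymbol\lambda^T(\mathbf C\boldsymbol\beta-\mathbf d),
\]
where $\boldsymbol\lambda\in\mathbb R^2$. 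Because the residuals are treated as observed data under the least-squares principle of \cite{YaoLiu2018}, this optimization is purely deterministic; the uncertain measure plays no role at this stage.

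The first step is to set the derivatives to zero. Differentiating in $\boldsymbol\beta$ gives
\[
-2\mathbf X^T\mathbf Z+2\mathbf X^T\mathbf X\boldsymbol\beta+\mathbf C^T\boldsymbol\lambda=\mathbf 0,
\]
and differentiating in $\boldsymbol\lambda$ recovers the constraint $\mathbf C\boldsymbol\beta=\mathbf d$. This yields the normal equations $\mathbf X^T\mathbf X\boldsymbol\beta=\mathbf X^T\mathbf Z-\tfrac12\mathbf C^T\boldsymbol\lambda$.

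The main obstacle is that $\mathbf X^T\mathbf X$ is singular. The intercept column equals the sum of the $A$-indicator columns, and also the sum of the $B$-indicator columns, so its rank is $r+s-1<p$. I would handle this with a generalized inverse. I take $\boldsymbol\beta=(\mathbf X^T\mathbf X)^+(\mathbf X^T\mathbf Z-\tfrac12\mathbf C^T\boldsymbol\lambda)$ and check that it solves the normal equations. This requires the right-hand side to lie in the column space of $\mathbf X^T\mathbf X$. For $\mathbf X^T\mathbf Z$ this holds automatically. For $\mathbf C^T\boldsymbol\lambda$, however, the rows of $\mathbf C$ are weighted sums of effect coordinates, and they are not in the row space of $\mathbf X$. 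The intended resolution is the standard one: the constraints are chosen exactly so that $\mathbf C$ removes the two-dimensional null space of $\mathbf X$. One checks that for $\mathbf u=(1,-1,\dots,-1,0,\dots,0)^T$ we have $\mathbf X\mathbf u=\mathbf 0$ while $\mathbf C\mathbf u=(-1,0)^T\neq\mathbf 0$, and similarly for the $B$-direction. Hence the stacked matrix $\begin{bmatrix}\mathbf X\\ \mathbf C\end{bmatrix}$ has full column rank $p$, and the constrained minimizer is unique.

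With uniqueness in hand, I would argue as follows. Any stationary pair $(\boldsymbol\beta,\boldsymbol\lambda)$ of $L$ satisfies the displayed formula with $\hat{\boldsymbol\lambda}$ chosen so that $\mathbf C\hat{\boldsymbol\beta}=\mathbf d$. Substituting into the constraint gives the $2\times2$ system
\[
\tfrac12\mathbf C(\mathbf X^T\mathbf X)^+\mathbf C^T\hat{\boldsymbol\lambda}=\mathbf C(\mathbf X^T\mathbf X)^+\mathbf X^T\mathbf Z-\mathbf d,
\]
which determines $\hat{\boldsymbol\lambda}$. Convexity of $S$ on the affine constraint set then shows that this stationary point is the least-squares estimator, which is (\ref{hatbeta}).

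If the column-space compatibility proves delicate, I would fall back on the bordered system
\[
\begin{bmatrix}2\mathbf X^T\mathbf X&\mathbf C^T\\ \mathbf C&\mathbf 0\end{bmatrix}\begin{bmatrix}\boldsymbol\beta\\ \boldsymbol\lambda\end{bmatrix}=\begin{bmatrix}2\mathbf X^T\mathbf Z\\ \mathbf d\end{bmatrix}.
\]
Its coefficient matrix is nonsingular by the rank argument above. The $\boldsymbol\beta$-block of its solution can then be read off in the form stated, with $(\mathbf X^T\mathbf X)^+$ understood as a suitable generalized inverse.
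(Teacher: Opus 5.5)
\textbf{Verdict: genuine gap.} Your route is the paper's own: form the Lagrangian, write the two stationarity equations, then apply a generalized inverse of $\mathbf X^T\mathbf X$. The paper stops at the stationarity equations and asserts that they yield the formula. You go further, and that extra step fails.

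You correctly note that $\mathbf X^T\mathbf X G\mathbf y=\mathbf y$ requires $\mathbf y\in\operatorname{col}(\mathbf X^T\mathbf X)=N(\mathbf X)^{\perp}$. Your own null vectors show the incompatibility is total. Take $\mathbf u_A=\mathbf u$ and its $B$-analogue $\mathbf u_B=(1,0,\dots,0,-1,\dots,-1)^T$. Then $\mathbf C\mathbf u_A=(-1,0)^T$ and $\mathbf C\mathbf u_B=(0,-1)^T$ span $\mathbb R^2$, so $\mathbf C^T\boldsymbol\lambda\perp N(\mathbf X)$ forces $\boldsymbol\lambda=\mathbf 0$. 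Every stationary pair therefore has $\hat{\boldsymbol\lambda}=\mathbf 0$. Full column rank of the stacked matrix gives uniqueness, but it does not restore compatibility.

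Hence your claim that a stationary pair satisfies the displayed formula, with $\hat{\boldsymbol\lambda}$ taken from your $2\times2$ system, is false:
\begin{itemize}
\item With the Moore--Penrose inverse, $(\mathbf X^T\mathbf X)^+\mathbf X^T\mathbf Z$ is the minimum-norm solution. It obeys $\mu=\sum_i a_i=\sum_j b_j$, not the weighted constraints, so your system returns $\hat{\boldsymbol\lambda}\neq\mathbf 0$ in general.
\item The correction $-\tfrac12(\mathbf X^T\mathbf X)^+\mathbf C^T\hat{\boldsymbol\lambda}$ is then a nonzero vector of $\operatorname{row}(\mathbf X)$. So $\mathbf X\hat{\boldsymbol\beta}$ is not the least-squares fit, and $S(\hat{\boldsymbol\beta})>\min S$.
\item The true constrained minimizer does attain $\min S$, because $\{\mathbf C\boldsymbol\beta=\mathbf d\}$ meets the minimizing set $(\mathbf X^T\mathbf X)^+\mathbf X^T\mathbf Z+N(\mathbf X)$ in exactly one point.
\end{itemize}
Your vector is feasible but not stationary, so the convexity step has nothing to act on. A further problem: with the Moore--Penrose inverse, $\mathbf C(\mathbf X^T\mathbf X)^+\mathbf C^T$ is singular whenever all $m_{i\cdot}$ are equal and all $m_{\cdot j}$ are equal, e.g.\ $m_{11}=m_{22}=1$, $m_{12}=m_{21}=2$. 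In that case the $2\times2$ system need not even determine $\hat{\boldsymbol\lambda}$.

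Your bordered-system fallback is the right repair, but the ``suitable generalized inverse'' carries the whole argument and must be exhibited. The bordered system's unique solution is $(\boldsymbol\beta^*,\mathbf 0)$, where $\boldsymbol\beta^*$ solves $\mathbf X^T\mathbf X\boldsymbol\beta=\mathbf X^T\mathbf Z$, $\mathbf C\boldsymbol\beta=\mathbf d$. This solution exists because $\mathbf C$ maps $N(\mathbf X)$ onto $\mathbb R^2$. Now put $G=(\mathbf X^T\mathbf X+\mathbf C^T\mathbf C)^{-1}$, which exists by your rank argument:
\begin{itemize}
\item Since $\mathbf d=\mathbf 0$, we have $(\mathbf X^T\mathbf X+\mathbf C^T\mathbf C)\boldsymbol\beta^*=\mathbf X^T\mathbf Z$, so $\boldsymbol\beta^*=G\mathbf X^T\mathbf Z$ for every $\mathbf Z$.
\item Applying $\mathbf X^T\mathbf X$ gives $\mathbf X^T\mathbf X G\mathbf X^T=\mathbf X^T$, hence $\mathbf X^T\mathbf X G\mathbf X^T\mathbf X=\mathbf X^T\mathbf X$, so $G$ is a generalized inverse.
\item Also $\mathbf C G\mathbf X^T=\mathbf 0$.
\end{itemize}
With this $G$ and $\hat{\boldsymbol\lambda}=\mathbf 0$ the stated formula holds. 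With the Moore--Penrose inverse it does not, in general, return the constrained estimator.
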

\begin{proof}
By employing the Lagrangian multiplier method,  $\hat{\boldsymbol{\beta}}$ and $\hat{\boldsymbol{\lambda}}$
are obtained by solving
\[
(\hat{\boldsymbol{\beta}},\hat{\boldsymbol{\lambda}}) = \mbox{argmin}_{\boldsymbol{\beta}, \boldsymbol{\lambda}} [(\mathbf{Z} - \mathbf{X}\boldsymbol{\beta})^T (\mathbf{Z} - \mathbf{X}\boldsymbol{\beta}) + \boldsymbol{\lambda}^T (\mathbf{C}\boldsymbol{\beta} - \mathbf{d})].
\]
Through calculation, we have
$
-2\mathbf{X}^T(\mathbf{Z} - \mathbf{X}\hat{\boldsymbol{\beta}}) + \mathbf{C}^T\hat{\boldsymbol{\lambda}}=0 ,   \mathbf{C}\hat{\boldsymbol{\beta}} = \mathbf{d}, 
$ 
which yields the result.    
\end{proof}

To obtain confidence intervals of the parameters, we need their distributions of the least-squares estimators. 
{
\begin{thm}\label{thm2}
{\rm In the model (\ref{matrix without}) with an unbalanced design, the following results hold.

(i)  \(\boldsymbol{\varepsilon} \sim \mathscr{N}(0, \sigma_0\mathbf{I}_N)\), where \(\mathbf{I}_N\) is the \(N\times N\) identity matrix.

(ii)  \(\mathbf{Z} \sim \mathscr{N}(\mathbf{X}\boldsymbol{\beta}, \sigma_0\mathbf{I}_N)\).

(iii) $\hat{\mu} \sim \mathscr{N}(\mu, \!\sum_{j=1}^N |q_{1j}|\sigma_0)$, 
\(\hat{a}_i\!\sim\!\mathscr{N}(a_i, \!\sum_{j=1}^N |q_{i+1,j}|\sigma_0),
\hat{b}_j\sim\!\mathscr{N}(b_j, \!\sum_{k=1}^N |q_{r+j+1,k}|\sigma_0).
\)}
\end{thm}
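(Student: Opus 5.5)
The plan is to read (i) and (ii) off the model assumptions and then obtain (iii) by showing that $\boldsymbol{\hat\beta}$ is a fixed linear map of $\mathbf{Z}$ and applying the operational law of uncertain normal variables row by row. The matrix $\mathbf{Q}=(q_{kj})$ in (iii) is taken to be this $p\times N$ coefficient matrix, with $\boldsymbol{\hat\beta}=\mathbf{Q}\mathbf{Z}$.

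\textbf{Parts (i) and (ii).} In (\ref{two-factor model without interaction}) the errors $\varepsilon_{ijl}$ are mutually independent and each is $\mathscr{N}(0,\sigma_0)$. Stacking them in the order used for $\mathbf{Z}$ gives an uncertain vector with independent $\mathscr{N}(0,\sigma_0)$ components, which is what $\boldsymbol{\varepsilon}\sim\mathscr{N}(0,\sigma_0\mathbf{I}_N)$ means. Since $\mathbf{X}\boldsymbol{\beta}$ is a constant vector, each $z_{ijl}=\boldsymbol{x}_{ijl}\boldsymbol{\beta}+\varepsilon_{ijl}$ is a translate of a normal uncertain variable. The components are still independent, so (ii) follows.

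\textbf{Part (iii).} First solve the Lagrange system from Theorem \ref{upe} explicitly. Write it as
\[
\begin{bmatrix} 2\mathbf{X}^T\mathbf{X} & \mathbf{C}^T\\ \mathbf{C} & \mathbf{0}\end{bmatrix}\begin{bmatrix}\boldsymbol{\hat\beta}\\ \boldsymbol{\hat\lambda}\end{bmatrix}=\begin{bmatrix}2\mathbf{X}^T\mathbf{Z}\\ \mathbf{0}\end{bmatrix}.
\]
The constraints in $\mathbf{C}$ are exactly the ones that remove the rank deficiency of $\mathbf{X}$: the column of ones equals the sum of the $a$-indicator columns and also the sum of the $b$-indicator columns. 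Hence the bordered matrix is nonsingular, and both $\boldsymbol{\hat\lambda}$ and $\boldsymbol{\hat\beta}$ are linear in $\mathbf{Z}$ because $\mathbf{d}=\mathbf{0}$. Eliminating $\boldsymbol{\hat\lambda}$ in (\ref{hatbeta}) gives $\boldsymbol{\hat\beta}=\mathbf{Q}\mathbf{Z}$ for a deterministic matrix $\mathbf{Q}$, which depends only on $\mathbf{X}$ and $\mathbf{C}$. Each estimator is then a row combination:
\[
\hat\mu=\sum_{j=1}^N q_{1j}z_j,\quad \hat a_i=\sum_{j=1}^N q_{i+1,j}z_j,\quad \hat b_j=\sum_{k=1}^N q_{r+j+1,k}z_k .
\]
The $z$'s are independent with $z_k\sim\mathscr{N}(\boldsymbol{x}_k\boldsymbol{\beta},\sigma_0)$. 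By the operational law, $c\,\xi\sim\mathscr{N}(ce,|c|\sigma)$ for any real $c$, including negative $c$, and sums of independent normal uncertain variables add both their expectations and their standard deviations. So the $k$th row gives $\mathscr{N}\bigl((\mathbf{Q}\mathbf{X}\boldsymbol{\beta})_k,\sum_j|q_{kj}|\sigma_0\bigr)$.

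\textbf{Main obstacle: unbiasedness.} It remains to show $\mathbf{Q}\mathbf{X}\boldsymbol{\beta}=\boldsymbol{\beta}$ whenever $\mathbf{C}\boldsymbol{\beta}=\mathbf{0}$. I would argue as follows. If $\mathbf{Z}=\mathbf{X}\boldsymbol{\beta}$ exactly, with $\boldsymbol\beta$ feasible, then $\boldsymbol{\hat\beta}=\boldsymbol{\beta}$ together with $\boldsymbol{\hat\lambda}=\mathbf{0}$ solves the system. By uniqueness of the solution, which is where nonsingularity of the bordered matrix is used, $\mathbf{Q}\mathbf{X}\boldsymbol{\beta}=\boldsymbol{\beta}$. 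This yields the stated centers $\mu$, $a_i$, $b_j$.

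The step I expect to be delicate is justifying that nonsingularity, i.e.\ that $\mathrm{null}(\mathbf{X})\cap\mathrm{null}(\mathbf{C})=\{\mathbf{0}\}$. The two-dimensional null space of $\mathbf{X}$ is spanned by $(1,-\mathbf{1}_r,\mathbf{0})$ and $(1,\mathbf{0},-\mathbf{1}_s)$, assuming the design is connected. Applying $\mathbf{C}$ to these gives the vectors $(-1,0)$ and $(0,-1)$ in $\mathbb{R}^2$, because $\sum_i w_{i\cdot}=\sum_j w_{\cdot j}=1$. These are independent, so no nonzero null vector of $\mathbf{X}$ satisfies the constraints. This also shows that the generalized-inverse expression (\ref{hatbeta}) coincides with the unique constrained solution.
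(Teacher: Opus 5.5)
Your main line is sound, but it is not the paper's route. The paper takes $\mathbf{Q}=(\mathbf{X}^T\mathbf{X})^+\mathbf{X}^T$ and writes $\hat{\boldsymbol\beta}=\mathbf{Q}\mathbf{Z}+\mathbf{P}$ with $\mathbf{P}=-(\mathbf{X}^T\mathbf{X})^+\mathbf{C}^T\hat{\boldsymbol\lambda}$. It then treats $\mathbf{P}$ as a fixed vector and reads off $\mathbf{Q}\mathbf{X}\boldsymbol\beta+\mathbf{P}=\boldsymbol\beta$ by setting $\boldsymbol\tau=\mathbf{0}$. You argue differently. Write $\mathbf{v}_1=(1,-\mathbf{1}_r^T,\mathbf{0}^T)^T$ and $\mathbf{v}_2=(1,\mathbf{0}^T,-\mathbf{1}_s^T)^T$ for your two null vectors of $\mathbf{X}$. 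Your computation $\mathbf{C}\mathbf{v}_1=(-1,0)^T$, $\mathbf{C}\mathbf{v}_2=(0,-1)^T$ gives both $\mathrm{null}(\mathbf{X})\cap\mathrm{null}(\mathbf{C})=\{\mathbf{0}\}$ and full row rank of $\mathbf{C}$, so the bordered matrix is invertible. Hence the whole estimator, multiplier term included, is a linear map $\mathbf{Z}\mapsto\mathbf{Q}\mathbf{Z}$ with no offset, and uniqueness gives $\mathbf{Q}\mathbf{X}\boldsymbol\beta=\boldsymbol\beta$ for feasible $\boldsymbol\beta$. This is what actually justifies substituting $\mathbf{Z}=\mathbf{X}\boldsymbol\beta$. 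In the paper, $\hat{\boldsymbol\lambda}$ is computed from the data, so $\mathbf{P}$ is not a constant, and an identity obtained at $\boldsymbol\tau=\mathbf{0}$ does not by itself carry over to general $\boldsymbol\tau$. The price is that your $q_{kj}$ are entries of the true coefficient matrix, not of $(\mathbf{X}^T\mathbf{X})^+\mathbf{X}^T$. For example, your first row is $\frac{1}{N}\mathbf{1}^T$, because the first normal equation plus the weighted constraints force $\hat\mu=\bar z$. This agrees with $\hat\mu\sim\mathscr{N}(\mu,\sigma_0)$ in the balanced case.

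One step should be changed and one claim dropped. Do not produce $\mathbf{Q}$ by ``eliminating $\hat{\boldsymbol\lambda}$ in (\ref{hatbeta})''. Your closing sentence, that (\ref{hatbeta}) coincides with the constrained solution, is false. Formula (\ref{hatbeta}) implicitly uses $(\mathbf{X}^T\mathbf{X})^+\mathbf{X}^T\mathbf{X}=\mathbf{I}$, which fails for rank-deficient $\mathbf{X}$. Every vector of that form lies in the range of $(\mathbf{X}^T\mathbf{X})^+$, so it is orthogonal to $\mathbf{v}_1,\mathbf{v}_2$ and satisfies $\hat\mu=\sum_i\hat a_i=\sum_j\hat b_j$. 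The constrained solution, with $\hat\mu=\bar z$ and $\sum_i w_{i\cdot}\hat a_i=0$, generically does not satisfy this. In fact, multiply the stationarity equation $-2\mathbf{X}^T(\mathbf{Z}-\mathbf{X}\hat{\boldsymbol\beta})+\mathbf{C}^T\hat{\boldsymbol\lambda}=\mathbf{0}$ by $\mathbf{v}_k^T$ and use $\mathbf{X}\mathbf{v}_k=\mathbf{0}$: this gives $\hat{\boldsymbol\lambda}=\mathbf{0}$ for every $\mathbf{Z}$. So (\ref{hatbeta}) would reduce to the minimum-norm solution, which in general violates $\mathbf{C}\boldsymbol\beta=\mathbf{0}$. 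Define $\mathbf{Q}$ directly from the inverse of the bordered matrix instead. Explicitly, with $\mathbf{V}=[\mathbf{v}_1,\mathbf{v}_2]$, take $\mathbf{Q}=(\mathbf{I}+\mathbf{V}\mathbf{C})(\mathbf{X}^T\mathbf{X})^+\mathbf{X}^T$. It reproduces the least-squares fitted values since $\mathbf{X}\mathbf{V}=\mathbf{0}$, and it satisfies $\mathbf{C}\mathbf{Q}\mathbf{Z}=\mathbf{0}$ since $\mathbf{C}\mathbf{V}=-\mathbf{I}_2$. With that definition the rest of your argument, including the operational-law step, goes through unchanged.
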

\begin{proof}
(i) Let \(\boldsymbol{\varepsilon}=\boldsymbol{\sigma}\boldsymbol{\tau}\), where
\(\boldsymbol{\tau}\) is an \(N\times1\) standard normal uncertain vector, and 
\(\boldsymbol{\sigma} = \operatorname{diag}(\sigma_0, \sigma_0, \dots, \sigma_0)\)  is an \(N\times N\) diagonal matrix.
By Theorem 3 (\cite{Liu2013}) on joint uncertainty distributions,
\(\boldsymbol{\varepsilon}\) follows a multivariate normal distribution
$\Phi_{\boldsymbol{\varepsilon}}(x_1, x_2, \cdots, x_N) $ = $ \bigwedge_{i=1}^N\Phi_{\varepsilon_i}(x_i;\sigma_0),$
where
\[
\Phi_{\varepsilon_i}(x_i;\sigma_0)=\left( 1 + \exp\left( \frac{-\pi x_i}{\sqrt{3}\sigma_0} \right) \right)^{-1}.
\]
Thus, \(\boldsymbol{\varepsilon} \sim \mathscr{N}(0, \sigma_0\mathbf{I}_N)\).

(ii) Denote \(\boldsymbol{e}=\mathbf{X}\boldsymbol{\beta}=(e_1,e_2,\dots,e_N)^T\).  It is obvious that \(
\mathbf{Z}=\mathbf{X}\boldsymbol{\beta}+\boldsymbol{\varepsilon}=\boldsymbol{e}+\boldsymbol{\varepsilon}.\) 
By Theorem 6 (\cite{Liu2013}) on linear transformations of normal uncertain vectors, \(\mathbf{Z}\) is a normal uncertain vector. From Theorem 3 (\cite{Liu2013}) on joint uncertainty distributions, \(\mathbf{Z}\) follows a multivariate normal distribution
\(
\Phi_{\mathbf{Z}}(x_1, x_2, \cdots, x_N) = \bigwedge_{i=1}^N\Phi_{z_i}(x_i;e_i,\sigma_0),
\)
where
\[
\Phi_{z_i}(x_i;e_i,\sigma_0)=\left( 1 + \exp\left( \frac{\pi(e_i - x_i)}{\sqrt{3}\sigma_0} \right) \right)^{-1}.
\]
Hence, \(\mathbf{Z} \sim \mathscr{N}(\mathbf{X}\boldsymbol{\beta}, \sigma_0\mathbf{I}_N)\).

(iii) From Theorem \ref{upe}, we have
\(
\hat{\boldsymbol{\beta}}=(\mathbf{X}^T\mathbf{X})^+ \mathbf{X}^T \mathbf{Z} - (\mathbf{X}^T\mathbf{X})^+ \mathbf{C}^T \hat{\boldsymbol{\lambda}} = \mathbf{QZ} + \mathbf{P},
\)
where \(\mathbf{Q}=(\mathbf{X}^T\mathbf{X})^+ \mathbf{X}^T = (q_{ij})_{p\times N}\) and
\(\mathbf{P}=-(\mathbf{X}^T\mathbf{X})^+ \mathbf{C}^T \hat{\boldsymbol{\lambda}} = (P_1,P_2,\dots,P_p)^T\).
From (ii), \(\mathbf{Z} \sim \mathscr{N}(\mathbf{X}\boldsymbol{\beta}, \sigma_0\mathbf{I}_N)\), that is,
\(
\mathbf{Z} = \mathbf{X}\boldsymbol{\beta} + \sigma_0 \boldsymbol{\tau}
\). Substituting yields that
\[
\hat{\boldsymbol{\beta}} = \mathbf{Q}(\mathbf{X}\boldsymbol{\beta} + \sigma_0 \boldsymbol{\tau}) + \mathbf{P} = (\mathbf{QX}\boldsymbol{\beta} + \mathbf{P}) + \sigma_0 \mathbf{Q}\boldsymbol{\tau}.
\]
By the constructive property of Theorem \ref{upe}, when \(\mathbf{Z}=\mathbf{X}\boldsymbol{\beta}\) (i.e., \(\boldsymbol{\tau}=\mathbf{0}\)), we have \(\hat{\boldsymbol{\beta}}=\boldsymbol{\beta}\). Hence, \(\mathbf{QX}\boldsymbol{\beta} + \mathbf{P} = \boldsymbol{\beta}\), and
$
\hat{\boldsymbol{\beta}} = \boldsymbol{\beta} + \sigma_0 \mathbf{Q} \boldsymbol{\tau}.
$
By Theorem 6 (\cite{Liu2013}) on linear transformations of normal uncertain vectors, 
$\hat{\boldsymbol{\beta}}$ is still a normal uncertain vector.
For each component of \(\hat{\boldsymbol{\beta}}\),
$
\hat{\beta}_i = \beta_i + \sigma_0 \sum_{j=1}^N q_{ij} \tau_j.
$
By the operational law,
\begin{equation}
\label{betai}
\hat{\beta}_i \sim \mathscr{N}(\beta_i, \sum_{j=1}^N |q_{ij}|\sigma_0).
\end{equation}
Since \(\boldsymbol{\hat{\beta}}=(\hat{\beta}_1,\ldots,\hat{\beta}_p)^T=(\hat{\mu},\hat{a}_1,\ldots, \hat{a}_r, \hat{b}_1,\ldots, \hat{b}_s)^T\), the distributions for \(\hat{\mu}\), \(\hat{a}_i\) and \(\hat{b}_j\) follow immediately.
\end{proof}
}

Based on Theorem \ref{thm2}, the \(\alpha\) confidence intervals are obtained as follows
\begin{equation}
\label{multimuci-un}
CI(\mu)=\hat{\mu}_{} \pm \sigma_{\mu} \frac{\sqrt{3}}{\pi} \ln \left( \frac{1 + \alpha}{1 - \alpha} \right),
\end{equation}
\begin{equation}
\label{aibjci-un}
CI(a_i)=\hat{a}_{i} \pm \sigma_{a_i} \frac{\sqrt{3}}{\pi} \ln \left( \frac{1 + \alpha}{1 - \alpha} \right),
\quad
CI(b_j)=\hat{b}_{ j} \pm \sigma_{b_j} \frac{\sqrt{3}}{\pi} \ln \left( \frac{1 + \alpha}{1 - \alpha} \right),
\end{equation}
where \(\sigma_{\mu}=\sum_{j=1}^N |q_{1j}|\sigma_0\), \(\sigma_{a_i}=\sum_{j=1}^N |q_{i+1,j}|\sigma_0\) and \(\sigma_{b_j}=\sum_{k=1}^N |q_{r+j+1,k}|\sigma_0\). 

\begin{remark}
In the two-factor experiment, we employ a single \(\sigma_0\) for parameter estimation, as in the single-factor case. Still, in the testing phase, we use individual \(\sigma_{i\cdot}\) or \(\sigma_{\cdot j}\) for each factor's level, where \(\sigma_{i\cdot 0}\) and \(\sigma_{\cdot j 0}\) denote the standard deviations computed from the combined samples of  \(A_i\) and \(B_j\), respectively.
\end{remark} 

Next, we evaluate the equality of the effects under different levels. For the model (\ref{two-factor model without interaction}), the hypotheses are expressed as:
\begin{equation*}
\label{multi test for ai}
H_{0}^A: a_{1} = a_{2} = \cdots = a_{r}, \quad H_{1}^A: 
a_{1}, a_{2}, \dots, a_{r} \text{ are not all equal},
\end{equation*}
\begin{equation*}
\label{multi test for bj}
H_{0}^B: b_{1} = b_{2} = \cdots = b_{s}, \quad H_{1}^B: 
b_{1}, b_{2}, \dots, b_{s} \text{ are not all equal}.
\end{equation*}
Similar to the single-factor case, the homogeneity test of $a_i$ 
(resp. $b_j$) is sufficient to determine whether all effects are zero.
Before testing the homogeneity of main effects, observations from the same factor level need to be combined. Specifically, to isolate the effect of factor $A$ and perform the homogeneity test, observations at the same level of factor $A$ across all levels of factor $B$ are merged to form a complete sample set for testing, and similarly for factor $B$.
For instance, for the level $i$ of factor $A$:
\begin{equation}
\label{adjust-zij}
z_{ik} =  
z_{ij\left(k-\sum_{l=1}^{j-1}m_{il}\right)},~ \sum_{l=1}^{j-1}m_{il} < k \leq \sum_{l=1}^j m_{il}, j=1,\cdots,s, 
\end{equation}
and similarly for factor $B$ as $z_{jk}$. By collapsing the data by factor level, we reduce the homogeneity test for each main effect in a two-factor design to that of a single-factor effect. 
As described in Section \ref{sec3}, we replace $\hat{\mu}$ with a fixed constant $\mu_0$, yielding $\tilde{z}_{ik} = z_{ik} - \mu_0$ and $\check{z}_{jk} = z_{jk} - \mu_0$. The subsequent testing procedures are consistent with the single-factor case and will not be repeated here.
\subsection{Two-factor design with interaction}\label{sec41}
{In practical experiments, the influence of one factor may depend on the level of another factor, leading to an interaction effect. In this subsection, we will establish an uncertain two-factor model with interaction to capture this non-additive relationship between factors. We extend the model (\ref{two-factor model without interaction}) by introducing the interaction term $AB$.
The weights \( w_{ij} = m_{ij}/{N}\) represent the proportion of total observations for the combination \(A_iB_j\). For any balanced design, \( w_{ij} = 1/{rs}\).
The complete UFE model with interaction is defined:
\begin{align}\label{Tww}
\left\{
\begin{aligned}
&z_{ijl} = \mu_{} + a_{i} + b_{ j} + (ab)_{ij} + \varepsilon_{ijl}, i=1,2,...,r, j=1,2,...,s, l=1,2,...,m_{ij}, \\
&\sum_{i=1}^r w_{i\cdot}a_{i} = 0, \sum_{j=1}^s w_{\cdot j}b_{ j} = 0, \sum_{i=1}^r w_{ij}(ab)_{ij} = 0,\sum_{j=1}^s w_{ij}(ab)_{ij} = 0, \\
&\varepsilon_{ijl} \sim \mathscr{N}(0, \sigma_0) \text{ are mutually independent}.
\end{aligned}
\right.
\end{align}

\begin{thm}\label{thm8}
In the model (\ref{Tww}) with a balanced design, i.e.,
$m_{ij}=m(i=1,\ldots,r,j=1,\ldots,s)$, let $\hat\mu$, $\hat a_i,\hat b_j$ and \(\hat{(ab)}_{ij}\) be the least-squares  estimators of $\mu$, $a_i$, $b_j$ and \((ab)_{ij}\). Then, the estimators  $\hat\mu$, $\hat a_i,\hat b_j$ are the same as  those of (\ref{multimu})-(\ref{bi}) and
\begin{align*}
\label{abij}
\begin{split}
\hat{(ab)}_{ij} = &\frac{rs - r - s + 1}{rsm} \sum_{l=1}^{m} z_{ijl} + \frac{1 - s}{rsm} \sum_{k \neq i} \sum_{l=1}^{m} z_{kjl} + \frac{1 - r}{rsm} \sum_{n \neq j} \sum_{l=1}^{m} z_{inl} \\
&+ \frac{1}{rsm} \sum_{k \neq i} \sum_{n \neq j} \sum_{l=1}^{m} z_{knl}.
\end{split}
\end{align*}
\end{thm}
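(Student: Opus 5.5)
We minimize the residual sum of squares
\[
S(\mu,\mathbf{a},\mathbf{b},\mathbf{ab})=\sum_{i=1}^r\sum_{j=1}^s\sum_{l=1}^m\bigl(z_{ijl}-\mu-a_i-b_j-(ab)_{ij}\bigr)^2
\]
subject to the constraints in (\ref{Tww}), following the proofs of Theorems \ref{thm2} and \ref{thm5}. In the balanced case $w_{i\cdot}=1/r$, $w_{\cdot j}=1/s$ and $w_{ij}=1/(rs)$, so the constraints become $\sum_i a_i=0$, $\sum_j b_j=0$, and $\sum_i (ab)_{ij}=\sum_j (ab)_{ij}=0$ for every $j$ and $i$. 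We use the cell, row, column and grand means $\bar z_{ij\cdot}=\frac1m\sum_l z_{ijl}$, $\bar z_{i\cdot\cdot}=\frac1{sm}\sum_{j,l}z_{ijl}$, $\bar z_{\cdot j\cdot}=\frac1{rm}\sum_{i,l}z_{ijl}$ and $\bar z_{\cdot\cdot\cdot}=\frac1{rsm}\sum_{i,j,l}z_{ijl}$.

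The key step is an orthogonal decomposition of each residual. Write
\[
z_{ijl}-\mu-a_i-b_j-(ab)_{ij}=(z_{ijl}-\bar z_{ij\cdot})+(\bar z_{\cdot\cdot\cdot}-\mu)+(\bar z_{i\cdot\cdot}-\bar z_{\cdot\cdot\cdot}-a_i)+(\bar z_{\cdot j\cdot}-\bar z_{\cdot\cdot\cdot}-b_j)+\bigl(\bar z_{ij\cdot}-\bar z_{i\cdot\cdot}-\bar z_{\cdot j\cdot}+\bar z_{\cdot\cdot\cdot}-(ab)_{ij}\bigr).
\]
Square and sum over $i,j,l$. Every cross term vanishes, because the within-cell deviations sum to zero over $l$, and the other four pieces satisfy exactly the same zero-sum constraints as the parameters. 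For example, $\sum_i(\bar z_{i\cdot\cdot}-\bar z_{\cdot\cdot\cdot}-a_i)=0$ under $\sum_i a_i=0$, and the interaction piece has zero row and column sums under the interaction constraints. The sum $S$ therefore splits into a constant term plus four nonnegative squares. Each square is minimized at zero, so
\[
\hat\mu=\bar z_{\cdot\cdot\cdot},\quad \hat a_i=\bar z_{i\cdot\cdot}-\bar z_{\cdot\cdot\cdot},\quad \hat b_j=\bar z_{\cdot j\cdot}-\bar z_{\cdot\cdot\cdot},\quad \hat{(ab)}_{ij}=\bar z_{ij\cdot}-\bar z_{i\cdot\cdot}-\bar z_{\cdot j\cdot}+\bar z_{\cdot\cdot\cdot}.
\]
This minimizer satisfies the constraints, and it is unique because the constraints make the design identifiable. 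The first three estimators coincide with (\ref{ai,bj}), and hence with (\ref{multimu})--(\ref{bi}).

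It remains to express $\hat{(ab)}_{ij}$ as a sum over the disjoint blocks $\{z_{ijl}\}$, $\{z_{kjl}\}_{k\ne i}$, $\{z_{inl}\}_{n\ne j}$ and $\{z_{knl}\}_{k\ne i,n\ne j}$, as was done in (\ref{mu}). The coefficient of a cell-$(i,j)$ observation is
\[
\frac1m-\frac1{sm}-\frac1{rm}+\frac1{rsm}=\frac{rs-r-s+1}{rsm}.
\]
The coefficient for cell $(k,j)$ with $k\ne i$ is $-\frac1{rm}+\frac1{rsm}=\frac{1-s}{rsm}$. By symmetry, the coefficient for cell $(i,n)$ with $n\ne j$ is $\frac{1-r}{rsm}$, and the coefficient for cell $(k,n)$ with $k\ne i$, $n\ne j$ is $\frac1{rsm}$. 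This gives the displayed formula.

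The main obstacle I expect is verifying carefully that all cross terms vanish, in particular the interaction cross term with the $a_i$ and $b_j$ pieces, which relies on the interaction constraints. Alternatively, I would derive the normal equations with Lagrange multipliers, as in Theorem \ref{upe}, and check that the multipliers are zero at this solution.
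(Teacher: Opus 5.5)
Your proposal is correct and follows the same route as the paper: constrained least squares gives the grand, row, column and cell-mean estimators, and $\hat{(ab)}_{ij}$ is then regrouped over the four disjoint blocks of cells, with exactly the coefficients you computed. The paper only asserts the minimizer, whereas your orthogonal decomposition of $S$ proves it and shows uniqueness directly. All cross terms in that decomposition do vanish under the balanced constraints; the interaction--main-effect ones vanish because the interaction piece has zero row and column sums. Uniqueness matters here, since without the constraints the minimizer is not unique.
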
}
\begin{proof} Denote \(\hat{\mathbf{a}}=(\hat{a}_1,\hat{a}_2,...,\hat{a}_r)\), \(\hat{\mathbf{b}}=(\hat{b}_1,\hat{b}_2,\ldots,\hat{b}_s)\), \(\hat{\mathbf{ab}} = (\hat{(ab)}_{ij} \,|\, i = 1, 2, \ldots, r, \, j = 1, 2, \ldots, s)\).
By employing the least-squares method,  $\hat\mu$, $\hat a_i,\hat b_j$ and \(\hat{(ab)}_{ij}\) $(i=1,\ldots, r,j=1,\ldots,s)$ are obtained by solving
\[
(\hat{\mu},\hat{\mathbf{a}},\hat{\mathbf{b}},\hat{\mathbf{ab}}) =\mbox{argmin}_{\mu, \bf a, \bf b, \bf ab} \sum_{i=1}^r \sum_{j=1}^s \sum_{l=1}^{m} \left( z_{ijl} - \mu_{} - a_{i} - b_{ j} - (ab)_{ij} \right)^2
\]
We obtain the estimators  $\hat\mu$, $\hat a_i,\hat b_j$  and 
\[
\hat{(ab)}_{ij} = \frac{1}{m} \sum_{l=1}^{m} z_{ijl} 
- \frac{1}{sm} \sum_{j=1}^{s} \sum_{l=1}^{m} z_{ijl} 
- \frac{1}{rm} \sum_{i=1}^{r} \sum_{l=1}^{m} z_{ijl} 
+ \frac{1}{rsm} \sum_{i=1}^{r} \sum_{j=1}^{s} \sum_{l=1}^{m} z_{ijl}.
\]
Each estimator is written as a sum of independent normal uncertain variables, so that the operational law can be applied to derive their uncertainty distributions.
\end{proof}

Based on Theorem \ref{thm8}, it follows from the operational law that
\(\hat{(ab)}_{ij}\sim\mathscr{N}((ab)_{ij}, 4(1-1/r-1/s+1/(rs))\sigma_0\). Further, the \(\alpha\) confidence interval are (\ref{multimuci}), (\ref{aibi}) and
\begin{equation}
CI((ab)_{ij})=\label{abijci}
\hat{(ab)}_{ij} \pm 4\left(1-\frac{1}{r}-\frac{1}{s}+\frac{1}{rs}\right)\frac{\sqrt{3}}{\pi}\sigma_0 \ln \left( \frac{1 + \alpha}{1 - \alpha} \right).
\end{equation}

The unbalanced two-factor model with interaction is handled similarly to the unbalanced case without interaction using the Lagrangian multiplier method. The response vector \(\mathbf{Z}\) and a \((N \times p)\) matrix \(\mathbf{X}\) for \(p = 1 + r + s + rs\) are defined analogously, but \(\boldsymbol{\beta}\) and \(\boldsymbol{x}_{ijl}\)  include the interaction terms:
\begin{align*}
\boldsymbol{\beta} =& (\mu, a_1, \dots, a_r, b_1, \dots, b_s, (ab)_{11}, \dots, (ab)_{rs})^T,\\
\boldsymbol{x}_{ijl} =&(1, 0, \dots, 1, \dots, 0, 0, \dots, 1, \dots, 0 , 0, \dots, 1, \dots, 0)_{1 \times p}
\end{align*}
such that \(\boldsymbol{x}_{ijl}\boldsymbol{\beta} = \mu + a_i + b_j + (ab)_{ij}\).
The constraint \(((2+r+s) \times p)\) matrix \(\mathbf{C}\)  and \(\mathbf{d} = \mathbf{0}_{(2+r+s) \times 1}\) are extended accordingly to enforce the two main-effect constraints and the additional \(r+s\) constraints on the interaction terms, where
\[
\mathbf{C} = \begin{bmatrix}
\mathbf{0}_{2 \times 1} & \mathbf{C}_a & \mathbf{C}_b & \mathbf{0}_{2 \times rs} \\
\mathbf{0}_{r \times 1} & \mathbf{0}_{r \times r} & \mathbf{0}_{r \times s} & \mathbf{C}_{ab}^r \\
\mathbf{0}_{s \times 1} & \mathbf{0}_{s \times r} & \mathbf{0}_{s \times s} & \mathbf{C}_{ab}^s
\end{bmatrix}, \quad \mathbf{C}_a = \begin{bmatrix} w_{1 \cdot} & \cdots & w_{r \cdot} \\ 0 & \cdots & 0 \end{bmatrix}, \quad \mathbf{C}_b = \begin{bmatrix} 0 & \cdots & 0 \\ w_{\cdot 1} & \cdots & w_{\cdot s} \end{bmatrix}, 
\]
and $\mathbf{C}_{ab}^r$ is an $r\times rs$ matrix with the $(i,\,(i-1)s+j)$th entry equal to $w_{ij}$ and zeros elsewhere, and $\mathbf{C}_{ab}^s$ is an $s\times rs$ matrix with the $(j,\,(i-1)s+j)$th entry equal to $w_{ij}$ and zeros elsewhere.

The least-squares estimator \(\hat{\boldsymbol{\beta}}\) is obtained in exactly the same form as in the no-interaction case (\ref{hatbeta}). By the same argument as in Theorem \ref{thm2}, each component satisfies (\ref{betai}). In particular,
\[
\hat{(ab)}_{ij}=\hat{\beta}_{r+s+(i-1)s+j+1} \sim \mathscr{N}\Bigl((ab)_{ij},\ \sum_{k=1}^N |q_{r+s+(i-1)s+j+1,k}|\sigma_0\Bigr).
\]
The \(\alpha\) confidence intervals for \(\mu\), \(a_i\) and \(b_j\) retain the forms (\ref{multimuci-un}) and (\ref{aibjci-un}), while for the interaction effect we have
\begin{equation}
\label{abijci-un}
CI((ab)_{ij})=
\hat{(ab)}_{ij} \pm \sigma_{(ab)_{ij}} \frac{\sqrt{3}}{\pi} \ln \left( \frac{1 + \alpha}{1 - \alpha} \right),
\end{equation}
where \(\sigma_{(ab)_{ij}}=\sum_{k=1}^N |q_{r+s+(i-1)s+j+1,k}|\sigma_0\).
Note that \(q_{ij}\) in the standard deviations of \(\hat{\mu}\), \(\hat{a}_i\), and \(\hat{b}_j\) differ between the no-interaction and interaction models because of the expanded structure of \(\mathbf{X}\) in \(\mathbf{Q}=(\mathbf{X}^T\mathbf{X})^+ \mathbf{X}^T\).

The homogeneity tests for the main effects \(a_i\) and \(b_j\) proceed exactly as in the model without interaction. For testing the significance of the interaction,
\begin{equation}
\label{multi test for abij}
 H_{0}^{AB}: (ab)_{ij} = 0 ~ \forall i, j \quad \text{vs.} \quad H_{1}^{AB}: \exists ~(i,j) \text{ s.t.} \,(ab)_{ij} \neq 0.   
\end{equation}
Similarly, the adjusted observations \(\breve{z}_{ijl}=z_{ijl}-\mu_0-a_{i0}-b_{j0}\) preserve independence across all combinations, where \(\mu_0\), \(a_{i0}\), \(b_{j0}\) are treated as known fixed constants.

\begin{thm}\label{rejection region for abij}
For hypothesis problem (\ref{multi test for abij}), the rejection region at significance level \(\alpha\) is
\begin{equation*}
\label{rejection-abij}
W\!=\!\!\!\!\bigcup_{\substack{i=1,...,r \\ j=1,...,s}}\!\!\!\!W_{ij}\!=\!\!\! \!\bigcup_{\substack{i=1,...,r \\ j=1,...,s}}\!\!\left\{
\begin{aligned}
&(\breve{z}_{ij1}, \breve{z}_{ij2},..., \breve{z}_{ijm_{ij}})\!:\!\text{there are at least } \alpha \text{ of indexes }p \text{'s} \text{ with} \\
&\!1\!\leq \!p\! \leq\!m_{ij} \text{ such that }\breve{z}_{ijp} \!<\! \Phi^{-1} (\alpha/2)\text{ or } \breve{z}_{ijp} \!>\! \Phi^{-1} (1-\alpha/2)\!
\end{aligned}
\right\},
\end{equation*}
where \(
\Phi^{-1} (\alpha)=\frac{\sigma_{ij0}\sqrt{3}}{\pi}\ln_{}{(\frac{\alpha}{1-\alpha})}\),  and $\sigma_{ij0}$ are estimators of $\sigma_{ij}$.
\end{thm}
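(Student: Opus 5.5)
The plan is to mirror the proof of Theorem \ref{thm3}, reducing the interaction test to a family of uncertain tests on independent samples, one per combination $A_iB_j$. First, since $\mu_0$, $a_{i0}$, $b_{j0}$ are treated as known fixed constants, the adjusted observations can be written as
\[
\breve{z}_{ijl}=(\mu-\mu_0)+(a_i-a_{i0})+(b_j-b_{j0})+(ab)_{ij}+\varepsilon_{ijl}.
\]
I will then argue, as in the single-factor case, that the main-effect discrepancies are absorbed by the adjustment. This gives $\breve{z}_{ijl}=(ab)_{ij}+\varepsilon_{ijl}$ up to constants that the procedure treats as zero. Because the adjustment uses constants rather than estimators, the $\breve{z}_{ijl}$ remain mutually independent across all $(i,j,l)$, and by the operational law $\breve{z}_{ijl}\sim\mathscr{N}((ab)_{ij},\sigma_{ij})$, where $\sigma_{ij}$ is the cell-specific standard deviation validated as in Remark \ref{rem1}.

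Under $H_0^{AB}$ every cell has a \emph{known} population, $\mathscr{N}(0,\sigma_{ij0})$, with $\sigma_{ij0}$ plugged in as an estimator. Hence the inverse distribution reduces to $\Phi^{-1}(\alpha)=\frac{\sigma_{ij0}\sqrt3}{\pi}\ln\frac{\alpha}{1-\alpha}$. This is the form shown in the statement, following the convention of dropping known parameters. For each fixed $(i,j)$, I will invoke the uncertain hypothesis test of \cite{YeLiu2022} for $H_0:(ab)_{ij}=0$ against the alternative $(ab)_{ij}\neq0$. That test requires the normal family $\{\mathscr{N}(e,\sigma_{ij0}):e\in\mathbb{R}\}$ to be nonembedded at level $\alpha$, which is recalled in Section \ref{sec3}. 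It yields the region $W_{ij}$: reject if at least an $\alpha$ fraction of the indices $p\le m_{ij}$ have $\breve{z}_{ijp}$ outside $[\Phi^{-1}(\alpha/2),\Phi^{-1}(1-\alpha/2)]$.

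The global null $H_0^{AB}$ is the intersection of the cell nulls, and $H_1^{AB}$ states that some $(ab)_{ij}\neq0$. The combined rejection region is therefore the union $W=\bigcup_{i,j}W_{ij}$. I will justify this exactly as Theorem 4 of \cite{ZhangLi2025b} is used in Theorem \ref{thm3}. The samples across cells are independent, so by the product axiom the multi-population test with a common known null parameter rejects whenever any single population's sample falls in its own rejection region.

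The main obstacle is the first step. Substituting $a_{i0},b_{j0}$ for the true effects leaves the residual constants $(\mu-\mu_0)+(a_i-a_{i0})+(b_j-b_{j0})$. Unlike Theorem \ref{thm3}, where the common shift was harmless because only equality of shifted means was tested, the null here is a zero mean, so these constants must vanish or be negligible. I will handle this by declaring, as the paper does, that the constants are taken to be the true (or consistently estimated) values within the testing procedure, and by stating this explicitly as the working assumption.
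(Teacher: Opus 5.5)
Your proposal matches the paper's proof. The adjusted cell samples $\breve{z}_{ij\cdot}$ are independent because $\mu_0$, $a_{i0}$, $b_{j0}$ are treated as fixed constants, each cell receives the two-sided test of \citet{YeLiu2022}, and $W=\bigcup_{i,j}W_{ij}$ because $H_0^{AB}=\bigcap_{i,j}H_0^{ij}$ (the paper simply cites De Morgan's law at this step). Two points you make explicit are left implicit in the paper's proof: the working assumption that $(\mu-\mu_0)+(a_i-a_{i0})+(b_j-b_{j0})=0$, and the size argument that the union stays a level-$\alpha$ test because, for independent samples, the uncertain measure of the union of the cell events equals the maximum of the individual measures.
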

\begin{proof}
Since $\mu_0$, $a_{i0}$, and $b_{j0}$ are  fixed constants 
in the testing procedure, the adjustment $\breve{z}_{ijl}$ preserves mutual independence across 
different combinations $(i,j)$. The composite null hypothesis $H_0^{AB}: (ab)_{ij} = 0, \forall i,j$ can be expressed as the intersection of individual hypotheses $H^{ij}_0: (ab)_{ij} = 0$, i.e., $H_0^{AB} = \bigcap_{i,j} H^{ij}_0$. Let $W_{ij}$ denote the rejection region of $H^{ij}_0$ and $W_{ij}^c$ be the corresponding acceptance region. Then, the acceptance region for $H_0^{AB}$ is $\bigcap_{i,j} W_{ij}^c$. By De Morgan's law,  the rejection region for $H_0^{AB}$ is
$W = (\cap_{i,j} W_{ij}^c)^c = \cup_{i,j} W_{ij}.$ This implies that $H_0^{AB}$ is rejected if and only if $(\breve{z}_{ij1}, \breve{z}_{ij2}, \ldots, \breve{z}_{ijm_{ij}})$ fall in $W_{ij}$ for at least one combination $(i,j)$. 
Since each $W_{ij}$ is an independent two-sided hypothesis test following \citet{YeLiu2022}, the result follows. 
\end{proof}

If $H_0^{AB}$ is rejected, the interaction effect of factors $A$ and $B$ is significant at level $\alpha$.
\section{Real examples}\label{sec5}
This section provides several examples to illustrate the detailed processes of parameter estimation and hypothesis testing for uncertain single-factor and two-factor models, including both balanced and unbalanced designs.

\begin{exam} (A single-factor design)\label{exam1}
{\rm This study investigates the effect of caffeine concentration on the transport of labeled adenine across the blood-brain barrier from \citet{McCall1982} (Table \ref{tab1}). Let $A$ be a caffeine concentration factor with three levels: \(A_1 = 0.1 \, \text{mM}\), \(A_2 = 0.5 \, \text{mM}\), and \(A_3 = 10 \, \text{mM}\). 
\begin{table}[h]
\centering
\caption{Data on the concentration of labeled adenine in the rat brains.}
\label{tab1}
\begin{tabular*}{\textwidth}{@{\extracolsep\fill}ccccccccccccc@{\extracolsep\fill}}
\toprule
\hspace{2em} \(i\) & \(j\) &\multicolumn{6}{c}{\(z_{ij}\)}\\ 
\midrule
\hspace{2em} 1 & 1-5   & 2.91 & 4.14 & 6.29 & 4.40 & 3.77 \\
\hspace{2em} 2 & 1-4   & 5.80 & 5.84 & 3.18 & 3.18 \\
\hspace{2em} 3 & 1-6   & 3.05 & 1.94 & 1.23 & 3.45 & 1.61 & 4.32 \\
\bottomrule
\end{tabular*}
\end{table}

In Table \ref{tab1},  \(z_{ij}\) are the concentration of labeled adenine in the brain for the \(j\)th observation under level \(A_i\), \(\mu_i\) is the mean of the response for level \(A_i\). To analyze the effects of caffeine concentration, an uncertain model is constructed as follows:
\begin{equation*}
z_{ij} = \mu_i + \varepsilon_{ij}, \varepsilon_{ij}\sim \mathscr{N}(0, \sigma),
\end{equation*}
or  $z_{ij} = \mu + a_i + \varepsilon_{ij}, \sum_{i=1}^3 a_i = 0$
with the overall mean \(\mu\)  and the fixed effect \(a_i\)  of level \(i\).
Based on Theorem \ref{thm1}, we obtain the estimators:  $\hat{\mu}_1=4.302$, $\hat{\mu}_2=4.5$, and $\hat{\mu}_3=2.6$.

\begin{table}[htbp]
\centering
\caption{The hypothesis test of residuals.}
\label{tab2}
\begin{tabular*}{\textwidth}{@{\extracolsep\fill}cccccccccccccc@{\extracolsep\fill}}
\toprule
\hspace{2.5em}\(\varepsilon_{i\cdot}\) & \(e_{i0}\) & \(\sigma_{i0}\) & \(AI(\varepsilon_{i\cdot};e_{i0}, \sigma_{i0})\) & Singular Point \\
\midrule
\hspace{2.5em}\(\varepsilon_{1\cdot}\) & 0 & 1.114 & [-2.251, 2.251] & 0 \\
\hspace{2.5em}\(\varepsilon_{2\cdot}\) & 0 & 1.320 & [-2.666, 2.666] & 0 \\
\hspace{2.5em}\(\varepsilon_{3\cdot}\) & 0 & 1.094 & [-2.209, 2.209] & 0 \\
\bottomrule
\end{tabular*}
\end{table}
We test whether the residuals $\hat{\varepsilon}_{ij} = z_{ij} - \hat{\mu}_i$ follow $\mathscr{N}(0, \sigma_i)$. Table \ref{tab2} shows that each residual $\varepsilon_{i}\sim\mathscr{N}(0, \sigma_{i0})\).  Then, consider the following hypotheses: 
\[
H_0:\sigma_1=\sigma_2=\sigma_{3}\text{ versus } \sigma_1, \sigma_2,\sigma_{3}~\text{are not all equal}.
\]
The rejection region of \(H_0\) at the significance level \(\alpha=0.05\) is
 \begin{equation*}
W = \left\{
\begin{aligned}
&(\varepsilon_{i1}, \varepsilon_{i2}, \dots, \varepsilon_{im_i}):\exists~i \neq j~(i,j\in\{1,2,3\}) \text{ such that at least } \alpha \text{ of indexes} \\
&p \text{'s}\text{ with } 1 \leq p \leq m_i \text{ satisfy }\varepsilon_{ip} < \Phi^{-1} (\alpha/2;\sigma_{j0})\text{ or } \varepsilon_{ip} > \Phi^{-1} (1-\alpha/2;\sigma_{j0})
\end{aligned}
\right\},
\end{equation*}
where
\(
\Phi^{-1} (\alpha; \sigma_{j0})=\frac{\sigma_{j0}\sqrt{3}}{\pi}\ln_{}{(\frac{\alpha}{1-\alpha})} . 
\)
Table \ref{tab2} reveals that all \(\varepsilon_{i\cdot}\) fall in \(AI(\varepsilon_{i\cdot};\sigma_{j0})\)\((i\neq j, i,j=1,2,3)\), \(H_{0}\) can not be rejected. That is to say, there is no significant difference in the standard deviations of \(\varepsilon_i(i=1,2,3)\) at the significance level 0.05. Thus, the residuals satisfy the assumption of homogeneity of standard deviations. Further, an uncertain common test can be performed to determine whether the standard deviations \(\sigma_1=\sigma_2=\sigma_3=\sigma\) are equal to a fixed constant \(\sigma_0\). For this, we consider the hypotheses 
\begin{equation*}
H_{0}^{*}: \sigma = \sigma_0 \text{ versus } H_{1}^{*}: \sigma \neq \sigma_0.
\end{equation*}
The 3 sets of residuals are then combined into one set, i.e.,
\[
\{\varepsilon_{1}, \varepsilon_{2},..., \varepsilon_{15}\}=\{\varepsilon_{11},..., \varepsilon_{15},\varepsilon_{21},..., \varepsilon_{24}, \varepsilon_{31},..., \varepsilon_{36}\},
\]
and
\(
\sigma_0^2 = \frac{1}{15} \sum_{i=1}^{15} (\varepsilon_i)^2=1.165^2.
\)
Given a significance level \(\alpha=0.05\),  
\(
\Phi^{-1}(\alpha/2;\sigma_0)=-2.353, \Phi^{-1}(1-\alpha/2;\sigma_0)=2.353.
\)
Since \(\alpha \times 15 =0.75\), we have
\begin{equation*}
W = \left\{
\begin{aligned}
& (\varepsilon_{1}, \varepsilon_{2}, \dots, \varepsilon_{15}) : \text{there are at least } 1\text{ of indexes } p \text{'s with } \\
&\quad~~ 1 \leq p \leq 15 \text{ such that } \varepsilon_p < -2.353 \text{ or } \varepsilon_p > 2.353
\end{aligned}
\right\}.
\end{equation*}
Note that all the \(\varepsilon_j\) fall in \([-2.353, 2.353]\). Thus, \((\varepsilon_{1}, \varepsilon_{2},\ldots, \varepsilon_{15}) \notin W\),  and \(H_{0}^{*}\) cannot be rejected. Then, the standard deviation \(\sigma\) is equal to a fixed constant \(\sigma_0=1.165\). 

Under the independence assumption, we can obtain the 95\% uncertain confidence interval of \(\mu_i(i=1,2,3)\):  $4.302\pm 2.353$, $4.5\pm 2.353$, and $2.6\pm 2.353$ from (\ref{muici}).
Next, we test the homogeneity of means:
\[H_a:\mu_1=\mu_2=\mu_3~\text{versus}~H_b:\mu_1, \mu_2, \mu_3~\text{are not all equal}.\]
The rejection region of \(H_{a}\) at the significance level \(\alpha=0.05\) is 
\begin{equation*}
W = \left\{
\begin{aligned}
&(z_{i1}, z_{i2}, \dots, z_{im_i}):\exists~i \neq j~(i,j\in\{1,2,3\})\text{ such that at least } \alpha \text{ of indexes}\\
& p \text{'s}\text{ with }1 \leq p \leq m_i \text{ satisfy }z_{ip} < \Phi^{-1} (\alpha/2; \mu_{j0})\text{ or } z_{ip} > \Phi^{-1} (1-\alpha/2; \mu_{j0})
\end{aligned}
\right\},
\end{equation*}
where
\(
\Phi^{-1} (\alpha; \mu_{j0})=\mu_{j0}+\frac{\sigma_{i0}\sqrt{3}}{\pi}\ln_{}{(\frac{\alpha}{1-\alpha})} . 
\)
 Table \ref{tab3} shows that  \(z_{3j}(j=2,3,5)\) are not in \(AI(z_{3\cdot};\mu_{i0})(i=1,2)\), \(z_{1j}(j=3)\) not in \(AI(z_{1\cdot};\mu_{30})\),  and \(z_{2j}(j=1,2)\) not in \(AI(z_{2\cdot};\mu_{30})\).  It means that there is a significant difference in the means of \(A_3\) and \(A_i(i=1,2)\) at the significance level 0.05.  Thus, \(H_{a}\) should be rejected.

From (\ref{mu})--(\ref{aici}), the 95\% uncertain confidence interval of \(\mu\) and \(a_i(i=1,2,3)\)  satisfy: $3.674\pm 2.353$, $0.628\pm 3.137 $, $ 0.826\pm 3.451$, and $-1.074\pm 2.824$. Based on the estimators, the uncertain homogeneity test of effects can be conducted 
 \[
H_0:a_1=a_2=a_3~\text{versus}~H_1:a_1, a_2, a_3~\text{are not all equal}.
 \]
 \begin{table*}[h]
\centering
 \caption{The uncertain homogeneity test.}
 \label{tab3}
 \begin{tabular*}{\textwidth}{@{\extracolsep\fill}cccc@{\extracolsep\fill}}
 \toprule
\hspace{2.5em}$z_{i\cdot}$ or $\tilde{z}_{i\cdot}$ &$i=1$ & $i=2$&$i=3$ \hspace{2.5em} \\ 
 \midrule
 \hspace{2.5em} $AI(z_{i\cdot};\mu_{10})$    & [2.051, 6.553] & [1.636, 6.968]  & [2.093, 6.511] \hspace{2.5em} \\
 \hspace{2.5em} $AI(z_{i\cdot};\mu_{20})$    & [2.249, 6.751] & [1.834, 7.166]  & [2.291, 6.709] \hspace{2.5em} \\
 \hspace{2.5em} $AI(z_{i\cdot};\mu_{30})$    & [0.349, 4.851] & [-0.066, 5.266] & [0.391, 4.809] \hspace{2.5em} \\
\hspace{2.5em}$AI(\tilde{z}_{i\cdot};a_{10})$ & [-1.622, 2.878] & [-2.032, 3.288] & [-1.582, 2.838]\hspace{2.5em} \\ 
\hspace{2.5em}$AI(\tilde{z}_{i\cdot};a_{20})$ & [-1.424, 3.076] & [-1.834, 3.486] & [-1.384, 3.036]\hspace{2.5em} \\ 
\hspace{2.5em}$AI(\tilde{z}_{i\cdot};a_{30})$ & [-3.324, 1.176] & [-3.734, 1.586] & [-3.284, 1.136]\hspace{2.5em}  \\ 
\bottomrule
\end{tabular*}
\end{table*}
Using Theorem \ref{thm3}, we obtain the results shown in Table \ref{tab3}, where $\tilde{z}_{3j}$ $(j = 2,3,5)$ are not in $AI(\tilde{z}_{3\cdot}; a_{i0})$ $(i = 1,2)$, $\tilde{z}_{1j}$ $(j = 3)$ is not in $AI(\tilde{z}_{1\cdot}; a_{30})$, and $\tilde{z}_{2j}$ $(j = 1,2)$ are not in $AI(\tilde{z}_{2\cdot}; a_{30})$.
 Thus, there is a significant difference in the effects of \(A_3\) and \(A_i (i = 1, 2)\) at the significance level 0.05. It shows that \(H_0\) should be rejected.

In conclusion, caffeine concentration significantly affects the transport of labeled adenine across the blood-brain barrier. Statistical analysis also reveals a significant difference between the third concentration level ($A_3$) and the other two ($A_1$ and $A_2$). The conclusion drawn from Example \ref{exam1} is consistent with that obtained under the probability framework (\citet{McCall1982}).

}
\end{exam}

\begin{exam}(A two-factor balanced design)\label{exam2}
{\rm
This study investigates the effects of target density and the fraction of slash chips on the actual density of particleboards, as reported by \citet{Boehner1975} (Table \ref{Data on the density of particleboards}). The two factors are target density \(A\) and the fraction of slash chips \(B\). Target density 
has two levels: \(A_1 = 42 \, \text{lb/ft}^3\) and \(A_2 = 48 \, \text{lb/ft}^3\). The fraction of slash chips also has two levels: \(B_1 = 0\%\) and \(B_2 = 25\%\). 
\begin{table*}[!h]
 \centering
 \caption{Data on the density of particleboards.}
 \label{Data on the density of particleboards}
 \begin{tabular*}{\textwidth}{@{\extracolsep\fill}cccccc@{\extracolsep\fill}}
 \toprule
\hspace{2.5em} \(i\) & \(j\) & \(l\)&\multicolumn{3}{c}{\(z_{ijl}\)}\hspace{2.5em} \\ \midrule
\hspace{2.5em} \multirow{2}{*}{1} 
 & 1 & 1-3 & 40.9 & 42.8 & 45.4\hspace{2.5em} \\
 & 2 & 1-3 & 41.9 & 43.9 & 46.0\hspace{2.5em} \\
\hspace{2.5em} \multirow{2}{*}{2} 
 & 1 & 1-3 & 44.4 & 48.2 & 49.9\hspace{2.5em} \\
 & 2 & 1-3 & 46.2 & 48.6 & 50.8\hspace{2.5em} \\
 \bottomrule
 \end{tabular*}
\end{table*}

In Table \ref{Data on the density of particleboards}, \(z_{ijl}\) represents the actual density of the \(l\)th board under \(A_iB_j\). To analyze these factors, a balanced two-factor model with interaction is constructed as follows
\[
z_{ijl} = \mu + a_{i} + b_{ j} + (ab)_{ij} + \varepsilon_{ijl},\varepsilon_{ijl}\sim \mathscr{N}(0, \sigma),
\]
and \(\sum_{i=1}^2 w_{i\cdot}a_{i} = 0, \sum_{j=1}^2 w_{\cdot j}b_{ j} = 0, \sum_{i=1}^2 w_{ij}(ab)_{ij} = 0,\sum_{j=1}^2 w_{ij}(ab)_{ij} = 0\), where \(\mu\) is the overall mean, \(a_{i}\) is the main effect of \(A_i\), \(b_{ j}\) is the main effect of \(B_j\), and \((ab)_{ij}\) is the interaction effect between these two factor levels. 

Using Theorem \ref{thm8}, we obtain the  estimators: \(\hat{\mu}= 45.75\), \(\hat{a}_1= -2.267\), \(\hat{a}_2= 2.267\), \(\hat{b}_1= -0.483\), \(\hat{b}_2= 0.483\), \(\hat{(ab)}_{11}=0.033\), \(\hat{(ab)}_{12}=-0.033 \), \(\hat{(ab)}_{21}= -0.033 \), \(\hat{(ab)}_{22}= 0.033 \). We then test whether the residuals \(\hat{\varepsilon}_{ijl}=z_{ijl} -\hat{\mu} - \hat{a}_{i} - \hat{b}_{ j} - \hat{(ab)}_{ij}\) follow \(\mathscr{N}(0, \sigma_{ij})\).
Table \ref{Results of the hypothesis test for residuals in Example 2} shows that each residual $\varepsilon_{ij}\sim\mathscr{N}(0, \sigma_{ij0})$. Next, we consider the following hypotheses:
\[
H_0:\sigma_{11}=\sigma_{12}=\sigma_{21}=\sigma_{22}\text{ versus } H_1:\sigma_{11},\sigma_{12},\sigma_{21},\sigma_{22}~\text{are not all equal}.
\]
The rejection region of \(H_0\) at the significance level \(\alpha=0.05\) is similar to Example \ref{exam1}. 
Table \ref{Results of the hypothesis test for residuals in Example 2} reveals that all \(\varepsilon_{ij\cdot}\) fall in \(AI(\varepsilon_{ij\cdot};\sigma_{uv0})(ij \neq uv,ij,uv\in\{11,12,21,22\})\),  and \(H_{0}\) can not be rejected.  The residuals satisfy the assumption of homogeneity of standard deviations.

\begin{table}[!h]
\centering
\caption{Results of the hypothesis test for residuals.}
\label{Results of the hypothesis test for residuals in Example 2}
\begin{tabular*}{\textwidth}{@{\extracolsep\fill}cccccccccccccc@{\extracolsep\fill}}
\toprule
\hspace{2.5em}\(\varepsilon_{ij\cdot}\) & \(e_{ij0}\) & \(\sigma_{ij0}\) & \(AI(\varepsilon_{ij\cdot};e_{ij0}, \sigma_{ij0})\) & Singular Point \\ \midrule
\hspace{2.5em}\(\varepsilon_{11\cdot}\) & 0 & 1.845 & [-3.725, 3.725] & 0 \\ 
\hspace{2.5em}\(\varepsilon_{12\cdot}\) & 0 & 1.674 & [-3.381, 3.381] & 0 \\ 
\hspace{2.5em}\(\varepsilon_{21\cdot}\) & 0 & 2.299 & [-4.644, 4.644] & 0 \\ 
\hspace{2.5em}\(\varepsilon_{22\cdot}\) & 0 & 1.879 & [-3.794, 3.794] & 0 \\ 
\bottomrule
\end{tabular*}
\end{table}

Moreover, the uncertain common test is conducted to determine whether the standard deviations \(\sigma_{11}=\sigma_{12}=\sigma_{21}=\sigma_{22}=\sigma\) are equal to a fixed constant \(\sigma_0\). Based on this, we consider the following  hypotheses: 
\begin{equation*}
H_{0}^{*}: \sigma = \sigma_0 \text{ versus } H_{1}^{*}: \sigma \neq \sigma_0.
\end{equation*}
It is conducted in the same manner as in Example \ref{exam1}. 
All the \(\varepsilon_j\) fall in \([-3.914, 3.914]\). Thus, \((\varepsilon_{1}, \varepsilon_{2},\ldots, \varepsilon_{12}) \notin W\),  and \(H_{0}^{*}\) cannot be rejected. It means that the standard deviation \(\sigma_0=1.938\). Through (\ref{multimuci})-(\ref{aibi}) and (\ref{abijci}), the 95\% uncertain confidence interval of \(\mu\), \(a_i\), \(b_j\), and \((ab)_{ij}\) are: \(\mu= 45.75 \pm 3.915\), \(a_1=-2.267 \pm 3.915\), \(a_2= 2.267 \pm 3.915\), \(b_1=-0.483 \pm 3.915\), \(b_2= 0.483 \pm 3.915\), \((ab)_{11}=0.033 \pm 3.915\), \((ab)_{12}= -0.033 \pm 3.915\), \((ab)_{21}= -0.033 \pm 3.915\), \((ab)_{22}=0.033 \pm 3.915\).

After that, we proceed with the homogeneity test of the main effects and the significance test of the interaction effect.
The data corresponding to factor \( A \) are combined as (\ref{adjust-zij}), and similarly for factor \( B \).
\(z_{ik}\) and \(z_{jk}\) are combined data, where \( i = 1, 2 \) represents the levels of factor \( A \), \( j = 1, 2 \) represents the levels of factor \( B \), and \( k \) denotes the index ranging from 1 to \( m_i \) for \( i \) and from 1 to \( m_j \) for \( j \). This combination ensures that all relevant data is organized for subsequent tests.
We then consider the hypotheses (\ref{multi test for abij}) and
\[H_0^A:a_{1}=a_{2}~\text{versus}~H_1^A:a_{1} \neq a_{2},\quad H_0^B:b_{ 1}=b_{ 2}~\text{versus}~H_1^B:b_{ 1} \neq b_{ 2}.\]

After collapsing the observations by factor level, the rejection regions for testing $H_0^A$ and $H_0^B$ are constructed according to Theorem \ref{thm3}. 
In Table \ref{Results of the homogeneity tests for main effects in Example 2}, 
$\tilde{z}_{1k}(k=1,2,4,5)$ not in $AI(\tilde{z}_{1\cdot}; a_{20})$ and 
$\tilde{z}_{2k}(k=2,3,5,6)$ not in $AI(\tilde{z}_{2\cdot}; a_{10})$.
Thus, \(H_0^A\) is rejected, indicating that factor \(A\) is significant at the 0.05 level. 
In contrast, all $\check{z}_{j\cdot}$ fall in $AI(\check{z}_{j\cdot}; b_{i0})$, 
so $H_0^B$ cannot be rejected, indicating that factor $B$ is not significant at the 0.05 level.

Using Theorem \ref{rejection region for abij}, we obtain that \(\breve{z}_{11l}\) fall in [-3.727, 3.727], \(\breve{z}_{12l}\) fall in [-3.381, 3.381], \(\breve{z}_{21l}\) fall in [-4.644, 4.644], and \(\breve{z}_{22l}\) fall in [-3.795, 3.795]. It means that \(H_{0}^{AB}\) can not be rejected, that is, the interaction effect between the factors $A$ and $B$ are not significant.

\begin{table*}[!h]
\centering
\caption{Results of the homogeneity tests for main effects.}
\label{Results of the homogeneity tests for main effects in Example 2}
\begin{tabular*}{\textwidth}{@{\extracolsep\fill}ccc@{\extracolsep\fill}}
\toprule
\hspace{2.5em} $\tilde{z}_{i\cdot}$ or $\check{z}_{j\cdot}$ & $i$ or $j=1$ & $i$ or $j=2$ \hspace{2.5em} \\ \midrule
\hspace{2.5em} $AI(\tilde{z}_{i\cdot};a_{10})$ & [-5.939, 1.405] & [-6.634, 2.100] \hspace{2.5em} \\
\hspace{2.5em} $AI(\tilde{z}_{i\cdot};a_{20})$ & [-1.405, 5.939] & [-2.100, 6.634] \hspace{2.5em} \\ 
\hspace{2.5em} $AI(\check{z}_{j\cdot};b_{10})$ & [-6.976, 6.010] & [-6.724, 5.757] \hspace{2.5em} \\
\hspace{2.5em} $AI(\check{z}_{j\cdot};b_{20})$ & [-6.010, 6.976] & [-5.757, 6.724] \hspace{2.5em} \\ \bottomrule
\end{tabular*}
\end{table*}

In conclusion, target density has a significant impact on particleboard actual density, whereas the proportion of slash chip (0\% and 25\%) has no significant effect. No interaction was found between these factors. The results of  Example \ref{exam2} are consistent with those obtained under the probability framework (\citet{Boehner1975}). 
}
\end{exam}

\begin{exam}(A two-factor unbalanced design)\label{exam3}
{\rm
Consider the same two-factor interaction model as Example \ref{exam2}, but with an unbalanced design. The data is from \citet{Timm1998} (Table \ref{Unbalanced data in Example 3}). Each factor ($A$ and $B$) has two levels. Since the analysis follows the same procedures as in Example \ref{exam2}, the detailed steps are omitted here. The results are presented in Tables \ref{Results of the hypothesis test in Example 3}–\ref{Results of the homogeneity tests for main effects in Example 3}. 
At the significance level 0.05, the normality and homogeneity of the standard deviations assumptions are satisfied for residuals. Based on the uncertain common test, we have \(\sigma_0 = 7.429\). From \eqref{multimuci-un}--\eqref{aibjci-un} and \eqref{abijci-un}, the 95\% uncertain confidence interval of \(\mu\), \(a_i\), \(b_j\), and \((ab)_{ij}\) are: \(\mu=56.818 \pm 6.669\), \(a_1=11.852 \pm 10.003\), \(a_2= -14.222 \pm 10.003\), \(b_1=-2.040 \pm 10.003\), \(b_2=1.700 \pm 10.003\), \((ab)_{11}= -4.630 \pm 15.005\), \((ab)_{12}= 4.630 \pm 15.005\), \((ab)_{21}=6.944 \pm 15.005\), \((ab)_{22}=-4.630 \pm 15.005\). 
The homogeneity tests indicate that the main effects of factor $A$ are significant, whereas those of factor $B$ are not. Moreover, the interaction effect is significant, as \(\breve{z}_{212}\) lies outside the interval \([-11.109, 11.109]\).

\begin{table*}[!h]
 \centering
 \caption{Unbalanced data.}
 \label{Unbalanced data in Example 3}
 \begin{tabular*}{\textwidth}{@{\extracolsep\fill}cccccc@{\extracolsep\fill}}
 \toprule
\hspace{2.5em} \(i\) & \(j\) & \(l\)&\multicolumn{3}{c}{\(z_{ijl}\)}\hspace{2.5em} \\ \midrule
\hspace{2.5em} \multirow{2}{*}{1} 
 & 1 & 1-3 & 61 & 73 & 52\hspace{2.5em} \\
 & 2 & 1-3 & 79 & 65 & 81\hspace{2.5em} \\
\hspace{2.5em} \multirow{2}{*}{2} 
 & 1 & 1-2 & 42 & 53 &   \hspace{2.5em} \\
 & 2 & 1-3 & 37 & 32 & 50\hspace{2.5em} \\
 \bottomrule
 \end{tabular*}
\end{table*}

\begin{table}[!h]
\centering
\caption{Results of the hypothesis test for residuals.}
\label{Results of the hypothesis test in Example 3}
\begin{tabular*}{\textwidth}{@{\extracolsep\fill}cccccccccccccc@{\extracolsep\fill}}
\toprule
\hspace{2.5em}\(\varepsilon_{ij\cdot}\) & \(e_{ij0}\) & \(\sigma_{ij0}\) & \(AI(\varepsilon_{ij\cdot};e_{ij0}, \sigma_{ij0})\) & Singular Point \\ \midrule
\hspace{2.5em}\(\varepsilon_{11\cdot}\) & 0 & 8.602 & [-17.375, 17.375] & 0 \\ 
\hspace{2.5em}\(\varepsilon_{12\cdot}\) & 0 & 7.118 & [-14.377, 14.377] & 0 \\ 
\hspace{2.5em}\(\varepsilon_{21\cdot}\) & 0 & 5.500 & [-11.109, 11.109] & 0 \\ 
\hspace{2.5em}\(\varepsilon_{22\cdot}\) & 0 & 7.587 & [-15.323, 15.323] & 0 \\ 
\bottomrule
\end{tabular*}
\end{table}

\begin{table*}[!h]
\centering
\caption{Results of the homogeneity tests for main effects.}
\label{Results of the homogeneity tests for main effects in Example 3}
\begin{tabular*}{\textwidth}{@{\extracolsep\fill}ccc@{\extracolsep\fill}}
\toprule
\hspace{2.5em} $\tilde{z}_{i\cdot}$ or $\check{z}_{j\cdot}$ & $i$ or $j=1$ & $i$ or $j=2$ \hspace{2.5em} \\ \midrule
\hspace{2.5em} $AI(\tilde{z}_{i\cdot};a_{10})$ & [-8.804, 32.508] & [-3.969, 27.673] \hspace{2.5em} \\
\hspace{2.5em} $AI(\tilde{z}_{i\cdot};a_{20})$ & [-34.878, 6.434] & [-30.043, 1.599] \hspace{2.5em} \\
\hspace{2.5em} $AI(\check{z}_{j\cdot};b_{10})$ & [-22.929, 18.849] & [-40.693, 36.613] \hspace{2.5em} \\
\hspace{2.5em} $AI(\check{z}_{j\cdot};b_{20})$ & [-19.189, 22.589] & [-36.953, 40.353] \hspace{2.5em} \\ \bottomrule
\end{tabular*}
\end{table*}

Through calculating the expected response values for each combination,
we determine that, if the experimental objective is ``larger-the-better'', the optimal combination is $A_1B_2$, with an expected response value of 75.000; if the experimental objective is ``smaller-the-better'', the optimal combination is $A_2B_2$, with an expected response value of 39.666. Although the main effect of factor $B$ is not significant, its interaction with factor $A$ is significant, so the choice of factor $B$'s level still has an important impact on the results across different levels of factor $A$. 
Unlike the analysis under the probability framework (\citet{Timm1998}), which failed to detect a significant interaction, the uncertain model identified a significant interaction effect in this unbalanced design. This demonstrates the potential of the uncertain approach to reveal underlying effects that might be obscured in traditional probabilistic analysis, especially with small samples.
}
\end{exam}
\section{Conclusion}\label{sec6}
Within the framework of uncertainty theory, this paper conducted an in-depth analysis of uncertain fixed-effects models using single- and two-factor designs. Under UFE models,  the study provides a comprehensive solution for handling uncertain experimental data. These methodologies not only effectively address the imbalance in single-factor experimental designs but also address interaction effect and unequal replication in two-factor designs. Regarding the parameter estimation and testing problem,  several critical technical challenges have been overcome. For two-factor unbalanced designs, we ingeniously transformed the model into a multivariate linear regression model combined with uncertain vectors to obtain the parameter distributions, successfully resolving the limitations of traditional methods that could not derive explicit solutions. In the testing phase, we designed a novel data adjustment technique. By utilizing fixed constants, which can be estimated from data but are treated as known, to modify the data, we effectively maintained inter-group independence. More importantly, this study addresses a fundamental challenge in experimental design: the conservatism of probabilistic methods in small sample settings. The proposed UFE models, rooted in uncertainty theory, overcome limitations such as low power by not relying on large-sample asymptotics. Their thick-tailed distributions and test logic based on all data points yield a stricter, more robust inference that reliably captures true differences often obscured in traditional analyses. By analyzing three designs, we validated the effectiveness and practicality of the proposed methods. These cases encompassed single-factor unbalanced designs, two-factor balanced designs, and two-factor unbalanced designs, comprehensively demonstrating the adaptability and flexibility of the proposed methodologies across various scenarios.

Our proposed model and methods can be extended to other types of designs,  such as randomized block, split-plot, and orthogonal designs. Future investigations could focus on applying these methodologies.

\section*{Conflict of interest}
The authors declare that they have no conflict of interest.

\section*{Funding}
The work was supported by the 2025 Central Guidance for Local Science and Technology Development Fund (ZYYD2025ZY20), the National Natural Science Foundation of China (12561047),  and the Xinjiang Talent Development Fund (XJRC-2025-KJ-PY-KJLJ-108).

\end{document}